\theoremstyle{plain} %remark,
\newtheorem{theorem}{Theorem}[section]
\newtheorem{lemma}[theorem]{Lemma}
\newtheorem{proposition}[theorem]{Proposition}
\newtheorem{definition}[theorem]{Definition}
\tikzstyle{vertex}  = [{circle,blue,draw,fill=black!50,inner sep=1pt}]  % regular vertex
\tikzstyle{uvertex} = [{violet,draw,fill=violet!50,inner sep=2pt}]  % universal vertex
\newcommand{\lp}[1]{\ensuremath{{\mathtt{lp}(#1)}}}
\newcommand{\rp}[1]{\ensuremath{{\mathtt{rp}(#1)}}}
\title{Characterization and Linear-time Recognition \\ of Paired Threshold Graphs}
\author{Guozhen Rong\thanks{School of Computer Science and Engineering, Central South University, Changsha, China.}
  \and
  Yixin Cao\thanks{Department of Computing, Hong Kong Polytechnic University, Hong Kong, China. \href{mailto:yixin.cao@polyu.edu.hk} {\tt yixin.cao@polyu.edu.hk}.  {Supported in part by the Hong Kong Research Grants Council (RGC) under grants 15201317 and 15226116, and the National Natural Science Foundation of China (NSFC) under grant 61972330.}} 
  \and
  Jianxin Wang\footnotemark[1]
}
\begin{document}
\maketitle

\begin{abstract}
  In a paired threshold graph, each vertex has a weight, and two vertices are adjacent if their weight sum is large enough and their weight difference is small enough.  It generalizes threshold graphs and unit interval graphs, both very well studied.
  We present a vertex ordering characterization of this graph class, which enables us to prove  that it is a subclass of interval graphs.  Further study of clique paths of paired threshold graphs leads to a simple linear-time recognition algorithm for the class. 

  % \emph{keywords}: {(paired) threshold graph \and (unit) interval graph \and forbidden induced subgraph \and interval model \and umbrella ordering \and clique path}.
\end{abstract}

\section{Introduction}

A graph is a \emph{threshold graph} if one can assign positive weights to its vertices in a way that two vertices are adjacent if and only if the sum of their weights is not less than a certain threshold.  Originally formulated from combinatorial optimization \cite{chvatal-77-inequalities-in-ip}, threshold graphs found applications in many diversified areas.  As one of the simplest nontrivial classes, the mathematical properties of threshold graphs have been thoroughly studied.  They admit several nice characterizations, including inductive construction, degree sequences, forbidden induced subgraphs (Figure~\ref{fig:threshold-free}), to name a few \cite{mahadev-95-threshold-graphs}.
Relaxing these characterizations in one way or another, we end with several graph classes, e.g., cographs, split graphs, trivially perfect graphs and double-threshold graphs \cite{nikolopoulos-00-cographs-threshold-graphs,foldes-77-split-graphs,golumbic-2004-perfect-graphs, kobayashi-19-double-threshold-graphs}.  Yet another closely related graph class is the difference graphs, defined solely by weight differences \cite{hammer-90-difference-graphs}.

\begin{figure}[h]
  \centering      
  \footnotesize
  \begin{subfigure}[b]{0.15\linewidth}
    \centering
    \begin{tikzpicture}[every node/.style={vertex},scale=.5]
      \node (a) at (-1,0) {};
      \node (c) at (1,0) {};
      \node (b) at (-1,2) {};
      \node (d) at (1,2) {};
      \draw (a) -- (b) (c) -- (d);
    \end{tikzpicture}
    \caption{$2K_2$}
  \end{subfigure}
  \begin{subfigure}[b]{0.15\linewidth}
    \centering
    \begin{tikzpicture}[every node/.style={vertex},scale=.5]
      \node (a) at (-1,0) {};
      \node (c) at (1,0) {};
      \node (b) at (-1,2) {};
      \node (d) at (1,2) {};
      \draw (b) -- (a) -- (c) -- (d);
    \end{tikzpicture}
    \caption{$P_4$}
  \end{subfigure}
  \begin{subfigure}[b]{0.15\linewidth}
    \centering
    \begin{tikzpicture}[every node/.style={vertex}, scale=.5]
      \node (a) at (-1,0) {};
      \node (c) at (1,0) {};
      \node (b) at (-1,2) {};
      \node (d) at (1,2) {};
      \draw (a) -- (b) -- (d) -- (c) -- (a);
    \end{tikzpicture}
    \caption{$C_4$}
  \end{subfigure}
  \caption{Minimal forbidden induced subgraphs for threshold graphs.  }
  \label{fig:threshold-free}
\end{figure}
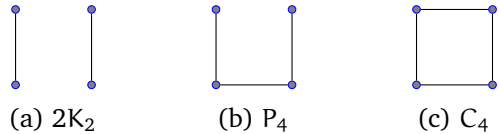

Motivated by applications in social and economic interaction modeling, Ravanmehr et al.~\cite{ravanmehr-18-paired-threshold-graphs} introduced paired threshold graphs, another generalization of threshold graphs.
A graph is a \emph{paired threshold graph} if there exist a positive vertex weight assignment $w$ and two positive thresholds, $T_+$ and $T_-$, such that two vertices are adjacent if and only if the sum of their weights are not less than $T_+$ and the difference of their weights are not greater than $T_-$.

An easy observation on a threshold graph is, vertices of small weights, less than half of the threshold to be specific, form an independent set, while other vertices form a clique.  (Hence, each threshold graph is a split graph.)  Clearly, the first part remains true for paired threshold graphs, but not the second.
Since the adjacency between a pair of high-weight vertices ($\ge \frac{T_+}{2}$) is only decided by their weight difference, they induce an indifference graph \cite{roberts-69-indifference-graphs}, which is more widely known as a unit interval graph.  The crucial point is thus to understand the interaction between these two sets of vertices.  For this purpose we may focus on paired threshold graphs that are neither threshold graphs nor unit interval graphs.
Ravanmehr et al.~\cite{ravanmehr-18-paired-threshold-graphs} presented a distance decomposition for such a paired threshold graph $G$: If $G$ is connected, then they are able to decompose $V(G)$ into a set $X$, which induces a threshold graph, and a sequence of cliques, of each of which the vertices have the same distance to $X$.

It is straightforward to show that paired threshold graphs are chordal: The vertex with the smallest weight is necessarily simplicial.
Since interval graphs also contain all threshold graphs and all unit interval graphs, a natural question is on the relationship between interval graphs and paired threshold graphs.  

\begin{theorem} \label{thm:PTisInterval}
  All paired threshold graphs are interval graphs.
\end{theorem}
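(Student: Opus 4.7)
The plan is to invoke Olariu's vertex-ordering characterization of interval graphs: a graph is an interval graph if and only if its vertices admit a linear ordering $v_1, \dots, v_n$ such that for every triple $i < j < k$ with $v_i v_k \in E$, we also have $v_j v_k \in E$. (Equivalently, each ``left neighborhood'' $N(v_k) \cap \{v_1, \dots, v_{k-1}\}$ is a consecutive block ending at position $k-1$.) Under this characterization, proving the theorem reduces to producing one such ordering.

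Starting from a paired threshold representation $(w, T_+, T_-)$, I would sort the vertices by non-decreasing weight, so that $w(v_1) \le w(v_2) \le \dots \le w(v_n)$. The verification of Olariu's condition is then a direct computation. Fix $i < j < k$ and assume $v_i v_k \in E$. Then, using $w(v_i) \le w(v_k)$, we have $w(v_i) + w(v_k) \ge T_+$ and $w(v_k) - w(v_i) \le T_-$. Because $w(v_i) \le w(v_j) \le w(v_k)$, pushing the middle weight up gives $w(v_j) + w(v_k) \ge w(v_i) + w(v_k) \ge T_+$ and $w(v_k) - w(v_j) \le w(v_k) - w(v_i) \le T_-$, so $v_j v_k \in E$. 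Applying Olariu's criterion then yields that $G$ is an interval graph.

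The main conceptual point — the only spot where one might trip — is that the two defining inequalities of a paired threshold graph do not interact symmetrically with the weight ordering. Moving the middle vertex $v_j$ upward toward $v_k$ is harmless: it only increases the weight sum and decreases the weight difference, so both thresholds remain satisfied. The symmetric move, arguing that $v_i v_j$ must be an edge, fails in general because $w(v_i) + w(v_j)$ can fall below $T_+$ even when $w(v_i) + w(v_k) \ge T_+$. This asymmetry is precisely why the weight ordering certifies the interval (Olariu) property rather than the stronger unit-interval ordering, which is consistent with the fact that paired threshold graphs strictly contain unit interval graphs.
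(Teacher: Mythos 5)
Your proof is correct and follows essentially the same route as the paper: sort the vertices by non-decreasing weight and verify the three-vertex interval-ordering condition by observing that raising the middle weight preserves both the sum and difference inequalities (the paper phrases this as the reversal of the weight ordering being an interval ordering in the sense of Ramalingam and Rangan, and packages it inside its broom-ordering lemma, but the computation is identical). Your closing remark about the asymmetry --- why this certifies only the interval condition and not the unit-interval one --- correctly identifies the key point.
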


Threshold graphs enjoy a very simple ordering characterization by the vertex degrees \cite{chvatal-77-inequalities-in-ip}, while the ordering of the intervals gives a vertex ordering characterization for unit interval graphs, called an umbrella ordering \cite{looges-93-greedy-algorithms-uig}.  On the other hand, interval graphs have a vertex ordering characterization with the so-called 3-vertex conditions \cite{ramalingam-88-domination-interval-graphs}.  We show that a paired threshold graph admits an interval ordering with the additional conditions: 
(1) it can be partitioned such that the first part induces an independent set and the second is an umbrella ordering; 
and (2) the neighborhood of every vertex from the first part is consecutive in this ordering.
We call such a vertex ordering a broom ordering.
\begin{theorem}\label{thm:broom-ordering-for-pt}
  A graph is a paired threshold graph if and only if it admits a broom ordering.
\end{theorem}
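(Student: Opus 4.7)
The plan is to prove the two directions separately, with the forward direction being largely routine and the backward direction requiring careful weight engineering. For the forward direction, given a paired threshold graph $G$ with weights $w$ and thresholds $T_+, T_-$, I would order the vertices in non-decreasing order of weight (ties broken arbitrarily). Writing $L = \{v : w(v) < T_+/2\}$ for the resulting prefix and $H = \{v : w(v) \ge T_+/2\}$ for the suffix, the four requirements of a broom ordering follow by direct inspection: $L$ is independent because two weights below $T_+/2$ cannot sum to $T_+$; on $H$, adjacency reduces to $|w(u) - w(v)| \le T_-$, so the weight order gives the classical umbrella ordering of a unit interval graph; for $v \in L$, the large neighbors of $v$ are exactly those $u \in H$ with $w(u) \in [T_+ - w(v),\, w(v) + T_-]$, a weight interval and hence a consecutive block in the ordering; and the 3-vertex interval condition follows from weight monotonicity along the ordering.

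For the backward direction, given a broom ordering with small prefix $v_1, \ldots, v_s$ and umbrella-ordered suffix $v_{s+1}, \ldots, v_n$, I would build the weights in three stages. First, invoke the standard unit-interval representation theorem to assign strictly increasing real weights $x_1 < \cdots < x_{n-s}$ to $v_{s+1}, \ldots, v_n$ with $v_{s+i}v_{s+j} \in E$ iff $|x_i - x_j| \le 1$, and set $T_- = 1$. Second, shift the $x_i$'s upward by a common constant so that $x_1$ exceeds $T_+/2$ once $T_+$ is fixed, guaranteeing that any two large vertices have weight-sum above $T_+$. Third, for each small vertex $v$ with $N(v) \cap H = \{v_j, \ldots, v_k\}$ (writing $\ell = j - s$ and $r = k - s$), pick $w(v)$ in the range
\[
\max\bigl(T_+ - x_\ell,\; x_r - 1\bigr) \;\le\; w(v) \;<\; \min\bigl(T_+ - x_{\ell-1},\; x_{r+1} - 1,\; T_+/2\bigr),
\]
with $x_0$ and $x_{n-s+1}$ interpreted as $\pm\infty$ at the boundaries. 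Any such $w(v)$ induces the correct adjacencies between $v$ and $H$, while the cap $T_+/2$ keeps the small prefix independent through the sum condition.

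The main obstacle is verifying that a single choice of $T_+$ and a single unit-interval representation of $H$ make \emph{all} these ranges simultaneously non-empty. Here the broom ordering is decisive in two ways. First, applying the 3-vertex condition to a small vertex $v$ and two of its large neighbors forces $N(v) \cap H$ to be a \emph{clique} in $H$, hence $x_r - x_\ell \le 1$ for every small $v$. Second, the same condition applied to two small vertices $v < v'$ and any large vertex yields $N(v) \cap H \subseteq N(v') \cap H$, so the blocks nest with $\ell$ weakly decreasing and $r$ weakly increasing along the small prefix. Feasibility of each $w(v)$ reduces to the pair-sum inequalities $x_{\ell-1} + x_r < T_+ + 1 < x_\ell + x_{r+1}$, and I would close the argument by choosing the unit-interval representation of $H$ as tightly as the umbrella structure permits, for instance via the canonical representation arising from the clique path of $H$. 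Combined with the clique and nesting properties above, this keeps the various lower bounds $x_{\ell-1} + x_r$ strictly below the corresponding upper bounds $x_\ell + x_{r+1}$, so a single value of $T_+$ serves all small vertices and the paired threshold representation is complete.
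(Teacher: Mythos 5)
Your forward direction is correct and is essentially the paper's own argument (Lemma~\ref{lemma:pt-besom-ordering}): sorting by weight gives the reversed interval ordering by monotonicity, the low-weight prefix is independent, the high-weight suffix is an umbrella ordering, and each low vertex sees a weight window. The gap is in the backward direction, precisely at the step you yourself label ``the main obstacle.'' You correctly reduce the problem to finding \emph{one} unit-interval representation $x_1 < \cdots < x_{n-s}$ of $H$ and \emph{one} constant $C = T_+ + T_-$ lying in every window $\bigl(x_{\ell_v-1}+x_{r_v},\; x_{\ell_v}+x_{r_v+1}\bigr)$, but the claim that the clique and nesting properties of the blocks, together with a ``tight'' or ``canonical'' representation of $H$, force these windows to intersect is neither proved nor true for an arbitrary valid representation. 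Concretely, let $H$ have vertices $h_1,\dots,h_7$ with $h_1,\dots,h_6$ a clique and $h_7$ adjacent to $h_4,h_5,h_6$, and take two small vertices with neighborhoods $\{h_3\}$ and $\{h_2,\dots,h_6\}$; the ordering $v,v',h_1,\dots,h_7$ is a broom ordering. The assignment $x=(0,\,0.01,\,0.02,\,0.03,\,0.5,\,1,\,1.03)$ is a perfectly valid unit-interval representation of $H$, yet the two windows are $(0.03,\,0.05)$ and $(1,\,1.04)$, which are disjoint; no $T_+$ exists, and shifting or rescaling does not help because the window system is invariant under both. A different representation, e.g.\ $x=(0,\,0.3,\,0.6,\,0.9,\,0.95,\,0.99,\,1.9)$, does work, but nothing in your argument selects it: when the block of $v$ is properly nested inside that of $v'$ on both sides, feasibility requires $x_{\ell_v}-x_{\ell_{v'}-1} > x_{r_{v'}}-x_{r_v+1}$, an inequality comparing gaps at opposite ends of the representation that the umbrella structure of $H$ alone does not control.

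The missing idea---which is the bulk of the paper's proof of Theorem~\ref{thm:Characterization}, direction $(3)\Rightarrow(1)$---is that the representation of $U$ must be constructed \emph{jointly with} $I$, not extracted from $G[U]$ alone. The paper assigns to each $u_j\in N(I)$ a weight that explicitly encodes $s(j)$, the least-indexed small vertex adjacent to it: vertices of $N(I)$ occurring before $v_q$ (the first neighbor of the smallest small vertex) receive $(2n-s(j))n+j$, so that their adjacency to $I$ is cut by the \emph{sum} threshold, while those from $v_q$ onward receive $(2n+s(j))n-(n-j)$, cut by the \emph{difference} threshold, with separate formulas for the vertices of $U$ outside $N[v_{p+1}]$; one then verifies that this is simultaneously a valid representation of $G[U]$ and places every window around the common threshold $2n^2$. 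To complete your proof you would need either to adopt such an $I$-aware construction, or to prove that the window system you derived is always feasible for some admissible choice of the $x_i$---and that existence statement is exactly the content you have deferred.
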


Unit interval graphs are interval graphs that can be represented using intervals of a single length.
It is known that any threshold graph can be represented by intervals of at most two different lengths \cite{golumbic-2004-perfect-graphs}.  (But not all interval graphs with two-length representation are threshold graphs.)
This is nevertheless not true for paired threshold graphs.  For each $k > 0$, we are able to construct paired threshold graphs that cannot be represented by intervals of $k$ different lengths.  In other words, the class of paired threshold graphs is not a subclass of $k$-length interval graphs, defined by Klav{\'{\i}}k et al.~\cite{klavik-16-nested-interval-graphs}.  
Recall that unit interval graphs are also proper interval graphs, interval graphs that can be represented using intervals none of which properly contains the other.  This has also been generalized by Klav{\'{\i}}k et al.~\cite{klavik-16-nested-interval-graphs}, who defined the classes of $k$-nested interval graphs, for $k \in \mathbb{N}$, to be the interval graphs that can be represented using intervals of which no $k$ nested.
Indeed, we show that there must be $k$ nested intervals in any interval model of this graph.
Therefore, the class of paired threshold graphs and the class of $k$-nested interval graphs are not comparable to each other.
See Figure~\ref{fig:overview} for an overview of related graph classes.

\tikzstyle{class} = [shape=rectangle, minimum height=5mm, rounded corners, draw, align=center, top color=white, bottom color=blue!20]
\newcommand{\hsc}[1]{{\footnotesize\sf\MakeUppercase{#1}}}
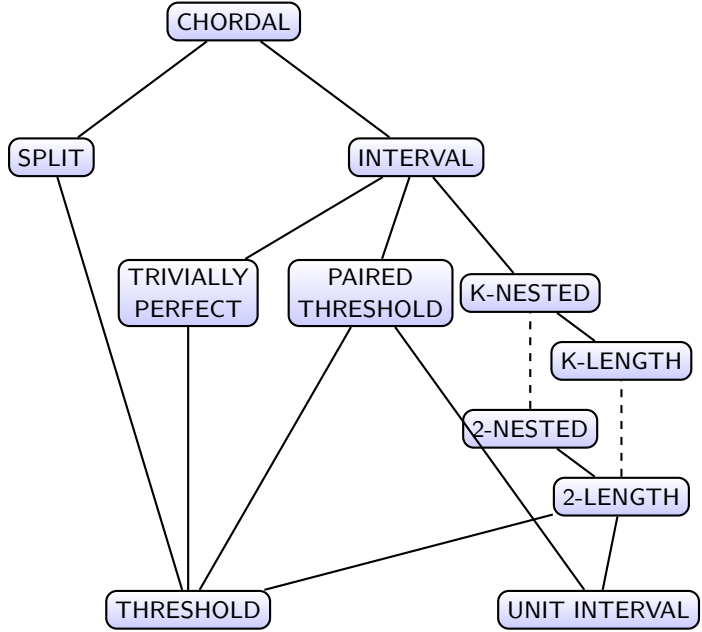
\begin{figure}[h]
  \centering\small
  \begin{tikzpicture}[every path/.style={thick}, scale = 1.2]
    \node[class] (chordal) at (-.5, 7) {\hsc{chordal}};
    \node[class] (split) at (-2.5, 5.5) {\hsc{split}};
    \node[class] (interval) at (1.5,5.5) {\hsc{interval}};
    \node[class] (threshold) at (-1,0.5) {\hsc{threshold}};
    \node[class] (tpg) at (-1,4) {\hsc{trivially}\\\hsc{perfect}};
    \node[class] (uig) at (3.5,0.5) {\hsc{unit interval}};
    \node[class] (paired) at (1,4) {\hsc{paired}\\\hsc{threshold}};
    
    \node[class] (l2) at (3.75,1.75) {\hsc{2-length}};
    \node[class] (n2) at (2.75,2.5) {\hsc{2-nested}};
    \node[class] (l3) at (3.75,3.25) {\hsc{k-length}};
    \node[class] (n3) at (2.75,4.) {\hsc{k-nested}};
    
    \draw (threshold) -- (split);
    \draw (split) -- (chordal);
    \draw (interval) -- (chordal);
    \draw (threshold) -- (paired) (uig) -- (paired) -- (interval); 
    \draw (threshold) -- (tpg) -- (interval);
    \draw (threshold) -- (l2) -- (n2);
    \draw (uig) -- (l2) (l3) -- (n3) -- (interval);
    \draw[dashed] (l2) -- (l3) (n2) -- (n3);
  \end{tikzpicture}

  \caption{A summary of related graph classes, especially subclasses of interval graphs.  For $k \in \mathbb{N}$, the classes of $k$-length interval graphs and $k$-nested interval graphs are defined in \cite{klavik-16-nested-interval-graphs}.  Note that the three immediate subclasses of interval graphs, namely, trivially perfect graphs, paired threshold graphs, and $k$-nested interval graphs are not comparable to each other.}
  \label{fig:overview}
\end{figure}

Similar as threshold graphs and split graphs, the class of paired threshold graphs is not closed under taking disjoint union of subgraphs.
If a paired threshold graph is not a unit interval graph, then in any assignment, there must be some vertex receiving weight $< T_+/2$ and some vertex receiving weight $\ge T_+/2$.  (Note that an independent set is trivially a unit interval graph.)  From the definition it is easy to verify that at most one component can be a non-unit interval graphs, which is of course a connected paired threshold graph. This turns out to be also sufficient.

For the recognition of paired threshold graphs, we may focus on connected non-unit interval graphs.  For such a graph, we show that it is a paired threshold graph if and only if an induced subgraph with certain property is.
From this subgraph, we can produce two partitions of its vertex set, and it is a paired threshold graph if and only if one of them defines a broom ordering of this subgraph.
Putting this together, we develop a linear-time algorithm for recognizing paired threshold graphs, improving the $O(|V(G)|^6)$-time algorithm of Ravanmehr et al.~\cite{ravanmehr-18-paired-threshold-graphs}.

\begin{theorem}\label{thm:RecognizePTG}
  Given a graph $G$, we can decide in $O(|V(G)| + |E(G)|)$ time whether $G$ is a paired threshold graph.
\end{theorem}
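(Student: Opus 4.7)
The plan is to reduce recognition to the broom-ordering characterization of Theorem~\ref{thm:broom-ordering-for-pt}, applied to at most one distinguished induced subgraph, and to carry out every reduction in linear time.

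First I would exploit the component structure noted in the introduction. Since every paired threshold graph is an interval graph (Theorem~\ref{thm:PTisInterval}), I would run a linear-time interval-graph recognition on $G$ and reject if it fails. Then I would apply a linear-time unit-interval recognition to each component: the introduction argues that in any paired-threshold assignment at most one component can be a non-unit interval graph, and that this condition is also sufficient. Hence, if at least two components fail the unit-interval test, reject; if none fails, accept; otherwise exactly one component $H$ is flagged and the whole problem reduces to deciding whether $H$ is a paired threshold graph.

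For the distinguished component $H$, I would follow the outline sketched in the introduction: first extract an induced subgraph $H'$ of $H$ whose paired-threshold status is equivalent to that of $H$, then produce at most two candidate partitions $(X,Y)$ of $V(H')$ into a low-weight independent set $X$ and a high-weight unit-interval part $Y$, and finally check whether either candidate extends to a broom ordering of $H'$. To obtain $H'$ and the two candidate partitions I would compute a clique path of $H$ in linear time; the two ends of this path supply at most two natural anchors from which a direct traversal of the end-cliques (together with the distance decomposition from $X$ mentioned earlier) yields the candidate set $X$. For each candidate I would then build the associated vertex ordering by listing $X$ first and then ordering $Y$ by an umbrella ordering derived from any proper-interval representation of $H'[Y]$; both are computable in linear time.

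What remains is to verify whether a candidate ordering is a broom ordering. I would check (i) that $X$ is independent, (ii) that the $Y$-segment is an umbrella ordering, and (iii) that for every $x\in X$ the set $N(x)\cap Y$ occupies a consecutive block of the ordering. Condition~(iii) can be handled by computing, for each $x$, the smallest and largest positions of its $Y$-neighbors and verifying that $|N(x)\cap Y|$ equals the length of that interval; summed over all $x$ this costs $O(|V(H')|+|E(H')|)$. If either candidate passes all three tests, $H$, and therefore $G$, is a paired threshold graph; otherwise it is not. Correctness then follows from Theorem~\ref{thm:broom-ordering-for-pt} together with the $H\leftrightarrow H'$ equivalence.

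The main obstacle is the second step: isolating the correct induced subgraph $H'$ and proving that only two candidate partitions must be tried, so that the pipeline stays linear. This rests on the structural lemma alluded to in the introduction, which in effect pins $X$ down to the two orientations of the clique path of $H$; without such a lemma one would face an unmanageable search over candidate low-weight sets and the running time would immediately blow up.
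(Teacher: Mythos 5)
Your outline follows the paper's skeleton (strip unit-interval components, reduce to one connected non-unit-interval component, read candidate independent sets off the two ends of a clique path, verify), but two essential pieces are missing or wrong. Most concretely, your verification of a candidate is incorrect. Checking only that $X$ is independent, that the $Y$-segment is an umbrella ordering, and that each $N(x)\cap Y$ occupies a consecutive block does \emph{not} certify a broom ordering: Definition~\ref{def:besom-ordering} also requires the reversal of the whole ordering to be an interval ordering, which in partition form (Definition~\ref{def:pt-partition}) amounts to $N[X]$ inducing a threshold graph and $N(X)\subseteq N[u]$ for the first vertex $u$ of the umbrella ordering. Your three tests would accept partitions violating this, e.g., one in which two vertices of $X$ have incomparable neighborhoods, so the algorithm can answer ``yes'' on non-paired-threshold inputs. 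In the other direction, the consecutiveness condition in Definition~\ref{def:pt-partition}(2) is existential over umbrella orderings of $G[Y]$, which are unique only up to reversal and arbitrary permutation within true-twin classes of $G[Y]$; vertices that are twins in $G[Y]$ but distinguished by some $x\in X$ mean that a single fixed umbrella ordering can fail the test even though a valid one exists. The paper's Lemma~\ref{lemma:CheckPartition} handles this by iteratively refining the twin classes; without it you get false negatives.

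Second, the reduction to ``at most two candidates from the ends of the clique path'' is precisely the hard content of the paper (Proposition~\ref{prop:CanonicalProperty} and Theorem~\ref{thm:CharacOfCliquePath}), and you acknowledge you do not have it. Worse, the statement you need is only true under a hypothesis you never arrange: Theorem~\ref{thm:CharacOfCliquePath}(3) pins $I$ to the ends of the clique path only when the graph stays connected after all universal vertices are deleted. The paper devotes Lemma~\ref{lemma:ComponentsOfPT} to the universal-vertex case, where exhaustive deletion may disconnect the graph (leading to the clique-plus-threshold-graph subcase and the obstructions of Figure~\ref{fig:forbidden-with-universal}); your pipeline skips this preprocessing entirely, so on inputs with universal vertices the candidate generation can miss the correct partition. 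As written, the proposal restates the algorithm's outline from the introduction rather than supplying the arguments that make it correct.
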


\section{Preliminaries}%\label{section:Chara}
All graphs discussed in this paper are undirected and simple.  The vertex set and edge set of a graph $G$ are denoted by, respectively, $V(G)$ and $E(G)$.
For a subset $X\subseteq V(G)$, denote by $G[X]$ the subgraph of $G$ induced by $X$, and by $G - X$ the subgraph $G[V(G)\setminus X]$; when $X$ consists of a single vertex $v$, we use $G - v$ as a shorthand for $G - \{v\}$.

The \emph{(open) neighborhood} of a vertex $v \in V(G)$, denoted by $N(v)$, comprises vertices adjacent to $v$, i.e., $N(v) = \{ u \mid uv \in E(G) \}$, and the \emph{closed neighborhood} of $v$ is $N[v] = N(v) \cup \{ v \}$.
The \emph{closed neighborhood} and the \emph{(open) neighborhood} of a set $X\subseteq V(G)$ of vertices are defined as $N[X] = \bigcup_{v \in X} N[v]$ and $N(X) =  N[X] \setminus X$, respectively.   We say that a vertex $v$ is \emph{simplicial} if $N[v]$ is a clique.

A graph $G$ is a \emph{threshold graph} if there exist a weight assignment $w : V(G) \rightarrow \mathbb{R}^+$ and a threshold $T \in \mathbb{R}^+$ such that $uv \in E(G)$ if and only if $w(u) + w(v) \ge T$.
Alternatively, a graph $G$ is a threshold graph if and only if its vertices can be partitioned into a clique and an independent set $I$ such that the neighborhoods of vertices in $I$ form a total order under the containment relation.
A graph $G$ is a \emph{paired threshold graph} if there exist a weight assignment $w : V(G) \rightarrow \mathbb{R}^+$ and two fixed thresholds $T_{+}, T_{-} \in \mathbb{R}^+$ such that $uv \in E(G)$ if and only if
\[
  w(u) + w(v) \geq T_{+} \quad\text{ and }\quad |w(u) - w(v)| \leq T_{-}.
\]
Given an assignment $w$ and thresholds $T_+, T_-$ for a paired threshold graph, we may adjust all the vertex weights by the same value $\epsilon$ and $T_+$ by $2\epsilon$, while keeping $T_-$ unchanged.  It is easy to verify that it represents the same graph.   Thus, we can always make the two thresholds equal.

\begin{proposition}
  A graph $G$ is a {paired threshold graph} if and only if there exist a weight assignment $w : V(G) \rightarrow \mathbb{R}^+$ and a threshold $T_{\pm} \in \mathbb{R}^+$ such that $uv \in E(G)$ if and only if
    $w(u) + w(v) \ge T_{\pm}$ and $|w(u) - w(v)| \le T_{\pm}$.
\end{proposition}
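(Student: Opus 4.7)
I would prove the forward direction directly from the uniform shift discussed just before the statement; the reverse direction is immediate by taking $T_+ = T_- = T_\pm$. Starting from a paired threshold representation $(w, T_+, T_-)$ of $G$, I would define $w'(v) := w(v) + \epsilon$, $T_+' := T_+ + 2\epsilon$, and $T_-' := T_-$. One checks that $w'(u) + w'(v) - T_+' = (w(u)+w(v)) - T_+$ and $|w'(u) - w'(v)| - T_-' = |w(u) - w(v)| - T_-$, so both defining inequalities retain their truth value for every pair $u, v$ and the same graph is represented. Setting $\epsilon := (T_- - T_+)/2$ forces $T_+' = T_-$, giving the common threshold $T_\pm := T_-$.

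The one point that needs care is positivity of the shifted weights. If $T_- \geq T_+$ then $\epsilon \geq 0$, no weight decreases, and there is nothing further to do. The main obstacle is the case $T_+ > T_-$, where $\epsilon < 0$ and I must arrange in advance that every weight exceeds $L := (T_+ - T_-)/2$. I would handle this with two small preprocessing steps. First, any non-isolated vertex $v$ already satisfies $w(v) \geq L$: a neighbor $u$ gives $w(u) + w(v) \geq T_+$ and $w(u) \leq w(v) + T_-$, which combine to $2w(v) \geq T_+ - T_-$. A sufficiently small downward perturbation of $T_+$ (smaller than the gap between $T_+$ and the largest sum realized by a non-adjacent pair with weight difference at most $T_-$) keeps the graph intact and turns this inequality strict. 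Second, for each isolated vertex $v$, I would reassign $w(v)$ to any value exceeding $\max_u w(u) + T_-$; the weight difference from $v$ to every other vertex then exceeds $T_-$, so $v$ remains isolated, and its new weight comfortably exceeds $L$. With every weight strictly above $L$, the shift by $-L$ produces the desired representation with $T_+' = T_-' = T_-$ and strictly positive weights.
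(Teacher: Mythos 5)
Your argument is correct and essentially identical to the paper's: both shift every weight by $(T_- - T_+)/2$, take $T_-$ as the common threshold, and first fix up the vertices whose weights would otherwise become non-positive---namely the isolated vertices and the boundary case $w(v) = (T_+ - T_-)/2$. The only cosmetic difference is that you resolve the boundary case by nudging $T_+$ downward, whereas the paper nudges the offending weights upward by a small $\varepsilon$.
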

\begin{proof} 
  The if direction is trivial.  For the only if direction, suppose that $w'$ is the weight assignment, and $T'_+, T'_-$ are the thresholds in the definition.
  Note that if $w'(v) < (T'_+ - T'_-)/2$, $v$ is an isolated vertex in $G$.
  For these vertices, we reset $w'(v)$ to be a big positive value such that $w'(v) - w'(u) > T'_-$ for each $u \in V(G) \setminus \{v\}$.
  If $w'(v) = (T'_+ - T'_-)/2$, we reset $w'(v) = w'(v) + \varepsilon$, where $\varepsilon$ is a sufficiently small positive value.
  After the adjustment, $w'$ remains a valid weight assignment and $w'(v) > (T'_+ - T'_-)/2$ for each vertex of $G$.

  Then, for each $v \in V(G)$, we set $w(v) = w'(v) - T'_+/2 + T'_-/2$.
  Vertices $u$ and $v$ are adjacent in $G$ if and only if $T'_+ \le w'(u ) + w'(v) = w(u) + T'_+/2 - T'_-/2 + w(v) + T'_+/2 - T'_-/2 = w(u) + w(v) + T'_+ - T'_-$, i.e., $w(u) + w(v) \ge T'_-$ and $T'_- \ge |w'(u ) - w'(v)| = |(w(u) + T'_+/2 - T'_-/2) - (w(v) + T'_+/2 - T'_-/2)| = |w(u) - w(v)|$.  Setting $T_\pm = T'_-$ will complete the proof.
\end{proof}

In the rest of the paper we use the same value for both thresholds.  We use $G(w, T_\pm)$ to denote a paired threshold graph with weight assignment $w$ and threshold $T_\pm$. 

In a threshold graph $G$ with weight assignment $w$ and threshold $T$, the weight $T/2$ defines a natural partition of the vertices: $\{v\mid w(v) < T/2\}$ forms an independent set, while $\{v\mid w(v) \ge T/2\}$ a clique.  We can use $T_\pm/2$ to get a similar partition of the vertex set of a paired threshold graph $G(w, T_\pm)$.  In particular, $\{v \mid w(v) < T_\pm/2\}$ remains an independent set.  However, $\{v \mid w(v) \ge T_\pm/2\}$ is no longer a clique in general.  Since the weight sum of any two such vertices in this set is at least $T_\pm$, it induces an indifference graph, or more widely known, a unit interval graph \cite{roberts-69-indifference-graphs}.
A graph is a \emph{unit interval graph} if its vertices can be assigned to unit-length intervals on the real line such that two vertices are adjacent if and only if their corresponding intervals intersect.

\begin{proposition}\label{prop:unit-interval}
  In a paired threshold graph $G(w, T_\pm)$, the subgraph induced by $\{v \mid w(v) \ge T_\pm/2\}$ is a unit interval graph.
\end{proposition}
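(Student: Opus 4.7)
The plan is to verify that on the set $H = \{v \mid w(v) \ge T_\pm/2\}$ the sum condition in the definition of a paired threshold graph becomes trivial, so that adjacency is controlled purely by the difference condition, which is precisely the defining property of an indifference (unit interval) graph.

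First I would take any two vertices $u, v \in H$ and observe that $w(u) + w(v) \ge T_\pm/2 + T_\pm/2 = T_\pm$, so the sum threshold is automatically met. Hence, within $G[H]$, we have $uv \in E(G)$ if and only if $|w(u) - w(v)| \le T_\pm$. This is exactly Roberts' indifference condition.

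Next I would produce an explicit unit interval representation: assign to each vertex $v \in H$ the closed interval $I_v = [\,w(v)/T_\pm,\; w(v)/T_\pm + 1\,]$, which has unit length. Two such intervals $I_u$ and $I_v$ intersect if and only if $|w(u)/T_\pm - w(v)/T_\pm| \le 1$, i.e., $|w(u) - w(v)| \le T_\pm$, matching the adjacency in $G[H]$ established in the previous step. There is no real obstacle here; the only thing to double-check is that the normalization by $T_\pm$ (a positive number) is legitimate and that using closed intervals matches the non-strict inequality $|w(u) - w(v)| \le T_\pm$. With that, $G[H]$ is realized as the intersection graph of unit-length intervals and is therefore a unit interval graph.
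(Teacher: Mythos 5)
Your proof is correct and follows the same idea as the paper, which simply notes that the sum condition $w(u)+w(v)\ge T_\pm$ is automatic on this vertex set, so adjacency reduces to the indifference condition $|w(u)-w(v)|\le T_\pm$ and cites Roberts. Your explicit unit interval model $[\,w(v)/T_\pm,\ w(v)/T_\pm+1\,]$ just makes that citation self-contained; it is a valid representation.
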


  As a matter of fact, all unit interval graphs are paired threshold graphs.  This has been observed in \cite{ravanmehr-18-paired-threshold-graphs}, and we include a proof because the construction used in it will be exemplary in this paper.
An interval model can be specified by the $2 n$ endpoints for the $n$ intervals: The interval for vertex $v$ is denoted by $[\lp{v}, \rp{v}]$, where \lp{v} and \rp{v} are the, respectively, left point and the right point of the interval.
\begin{proposition}\label{prop:uig-assignment}
  A unit interval graph $G$ is a paired threshold graph.  Moreover, there is an assignment $w$ such that $w(v) \ge T_\pm$ for all vertices $v\in V(G)$.
\end{proposition}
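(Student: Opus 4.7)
The plan is to read off a weight assignment directly from a unit-interval model of $G$. Begin by fixing any representation of $G$ in which every interval has length exactly $1$, and translate the model along the real line so that $\lp{v} \ge 0$ for every vertex $v$. Then set $T_{\pm} = 1$ and define $w(v) = \lp{v} + 1$ for each $v \in V(G)$.

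The ``moreover'' clause is immediate from $\lp{v} \ge 0$: every weight satisfies $w(v) \ge 1 = T_{\pm}$. As a consequence, for any two vertices $u, v$, the sum condition $w(u) + w(v) \ge 2 \ge T_{\pm}$ holds automatically and imposes no restriction. Adjacency is therefore governed entirely by the difference condition $|w(u) - w(v)| \le T_{\pm}$, which by construction equals $|\lp{u} - \lp{v}| \le 1$; since the intervals have length $1$, this is exactly the condition that the closed intervals $[\lp{u}, \lp{u}+1]$ and $[\lp{v}, \lp{v}+1]$ intersect, i.e., $uv \in E(G)$.

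There is essentially no technical obstacle: the argument is a direct reinterpretation of the unit-interval representation as a weight assignment. The only design choice is the additive shift by $1$, which simultaneously realises the ``moreover'' clause and ensures the sum condition is satisfied globally so that it does not interfere with the difference condition. This construction will serve as a template in what follows, since the same ``shift to make all weights large'' trick lets one freely compose paired threshold models without worrying about the sum threshold.
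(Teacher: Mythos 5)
Your construction is correct and is essentially the paper's own proof: both read the weights off the left endpoints of a unit-length interval model after a translation that makes all weights at least $T_\pm$, so that the sum condition becomes vacuous and adjacency is governed solely by the difference condition, which coincides with interval intersection. The only cosmetic difference is that you normalize $T_\pm = 1$ and shift by adding $1$ to the left endpoints, whereas the paper scales the intervals to length $T_\pm$ and shifts the model itself.
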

\begin{proof}
  We may start with any unit interval model for $G$.  We first scale the intervals such that all intervals have length $T_\pm$.  Then we increase all the endpoints by the same value such that $\min_{v\in V(G)} \lp{v} \ge T_\pm$.    It is easy to verify that setting $w(v) = \lp{v}$ for all vertices $v$ is a valid weight assignment for the graph.
\end{proof}

The discussion above can be summarized as that a paired threshold graph can be partitioned into an independent set and a unit interval graph.
Now that we have thus fully understood both parts, it is time to put their connection under scrutiny.  Although the following is straightforward, we want to point out that the assignment $w$ for vertices in $N[I]$ is not necessarily a threshold assignment for them with respect to threshold $T_\pm$.

\begin{proposition}\label{lem:intersection}
  Let $G(w, T_\pm)$ be a paired threshold graph, and let $I = \{v \mid w(v) < T_\pm / 2\}$. Then $N[I]$ induces a threshold graph.
\end{proposition}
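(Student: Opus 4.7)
The plan is to use the alternative characterization of threshold graphs recalled in the preliminaries: a graph is threshold iff its vertex set partitions into a clique and an independent set whose vertex-neighborhoods are totally ordered by containment. The natural partition to try is $I$ itself together with $C := N(I)$.

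First I would verify the easy structural facts. Since any two vertices of $I$ have weight sum $< T_\pm$, $I$ is independent (this is already essentially noted before Proposition~\ref{prop:unit-interval}). Next, I would observe that every $v\in C$ must have $w(v) > T_\pm/2$: if $v$ is adjacent to some $x\in I$ then $w(v)\ge T_\pm - w(x) > T_\pm/2$. In particular $I\cap C=\emptyset$.

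The key lemma I would establish is that the neighborhoods of vertices in $I$ are nested. Order $I$ as $x_1,\dots,x_k$ with $w(x_1)\le\dots\le w(x_k)$, and I would prove $N(x_i)\subseteq N(x_{i+1})$ directly from the definition. If $v\in N(x_i)$ then $w(v)+w(x_{i+1}) \ge w(v)+w(x_i)\ge T_\pm$; and since $w(v) > T_\pm/2 > w(x_{i+1})$, we have $0 < w(v)-w(x_{i+1}) \le w(v)-w(x_i) \le T_\pm$, so $v\in N(x_{i+1})$.

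With the nesting in hand I would deduce that $C$ is a clique, which is the remaining nontrivial step. Given $u,v\in C$, pick $x_i,x_j\in I$ with $u\in N(x_i)$ and $v\in N(x_j)$; WLOG $i\le j$, and by the nested neighborhoods $u,v\in N(x_j)$. Then both $w(u)$ and $w(v)$ lie in the interval $[T_\pm-w(x_j),\,T_\pm+w(x_j)]$, which has length $2w(x_j)<T_\pm$; together with $w(u)+w(v)>T_\pm$ this forces $u\sim v$. So $C$ is a clique, $I$ is an independent set with totally ordered neighborhoods contained in $C$, and $G[N[I]]$ is threshold by the characterization recalled in the preliminaries. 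The main obstacle is really this last step: showing that $C$ is a clique uses both the weight bounds coming from common low-weight neighbors and the prior nesting of the $N(x_i)$; without the nesting one cannot guarantee that $u$ and $v$ share a witness $x_j\in I$ forcing their weights close together.
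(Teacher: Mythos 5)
Your proof is correct and follows the same overall strategy as the paper: both verify the clique-plus-independent-set characterization of threshold graphs for the partition of $N[I]$ into the clique $N(I)$ and the independent set $I$, and your nesting argument for the neighborhoods of vertices in $I$ is identical to the paper's. The one place you diverge is the proof that $N(I)$ is a clique. You first establish the nesting so that two vertices $u, v \in N(I)$ share a common witness $x_j \in I$, and then confine both weights to the interval $[T_\pm - w(x_j),\, T_\pm + w(x_j)]$ of length $2w(x_j) < T_\pm$. The paper does this step without the nesting: for $y, z \in N(I)$ with $w(y) \le w(z)$ it takes any witness $x \in I$ adjacent to the \emph{heavier} vertex $z$; since $w(x) < T_\pm/2 < w(y) \le w(z)$, the sandwiching alone gives $w(y) + w(z) > w(x) + w(z) \ge T_\pm$ and $w(z) - w(y) < w(z) - w(x) \le T_\pm$. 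So your closing remark---that the clique step genuinely requires the prior nesting because one cannot otherwise guarantee a shared witness---is not quite right: no shared witness is needed, only a witness for the vertex of larger weight. Both arguments are valid; the paper's is marginally more self-contained since the clique property and the nesting become independent verifications, while yours buys the clique step at the cost of ordering the two lemmas.
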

\begin{proof}
  For any pair of vertices $y, z \in N(I)$ with $w(y) \le w(z)$, there must be a vertex $x \in I$ such that $w(x) < w(y) \le w(z)$ and $xz \in E(G)$.
  By definition, $w(x) + w(z) \ge T_\pm$ and $w(z) - w(x) \le T_\pm$. As a result,
  \begin{align*}
  w(y) + w(z) &> w(x) + w(z) \ge T_\pm
  \text{ and}
  \\
  w(z) - w(y) &< w(z) - w(x) \le T_\pm,
  \end{align*}
  which imply $yz \in E(G)$. Hence $N(I)$ composes a clique.
  
  It is known that $I$ is an independent set of $G$. To prove that $N[I]$ induces a threshold graph, it suffices to prove that for any pair of vertices $x, y \in I$ with $w(x) \le w(y)$, $N(x) \subseteq N(y)$.
  If $N(x) = \emptyset$, clearly, $N(x) \subseteq N(y)$.
  Otherwise, let $z$ be any vertex of $N(x)$. Then, $w(z) \ge T_\pm /2 > w(y)  \ge w(x)$. By definition, $w(x) + w(z) \ge T_\pm$ and $w(z) - w(x) \le T_\pm$. As a result,
  \begin{align*}
  w(y) + w(z) &\ge w(x) + w(z) \ge T_\pm
  \text{ and}
  \\
  w(z) - w(y) &\le w(z) - w(x) \le T_\pm,
  \end{align*}
  which imply $yz \in E(G)$. Hence $N(x) \subseteq N(y)$.
\end{proof}

However, Propositions~\ref{prop:unit-interval} and~\ref{lem:intersection} are not sufficient for a graph to be a paired threshold graph; e.g., for each of the two graphs in Figures~\ref{fig:smallgraph}, setting $I$ to be a single simplicial vertex, we get a partition satisfying both propositions.  There is a more technical condition on the connection in between.

\begin{figure}[h]
  \centering      
  \footnotesize
  \begin{subfigure}[b]{0.17\linewidth}
    \centering
    \begin{tikzpicture}[scale=.2]
      \node [vertex] (s) at (0,6) {};
      \node [vertex] (a) at (-5,0) {};
      \node [vertex] (a1) at (-2,0) {};
      \node [vertex] (b1) at (2,0) {};
      \node [vertex] (b) at (5,0) {};
      \node [vertex] (c) at (0,3.5) {};
      \draw[] (a) -- (a1) -- (b1) -- (b);
      \draw[] (c) -- (s);
      \draw[] (a1) -- (c) -- (b1);
    \end{tikzpicture}
    \caption{}\label{fig:net}    
  \end{subfigure} 
  \begin{subfigure}[b]{0.17\linewidth}
    \centering
    \begin{tikzpicture}[scale=.2]
      \node [vertex] (s) at (0,6) {};
      \node [vertex] (a) at (-4,0) {};
      \node [vertex] (a1) at (0, 0) {};
      \node [vertex] (b) at (4,0) {};
      \node [vertex] (c1) at (-2,3) {};
      \node [vertex] (c2) at (2,3) {};
      \draw[] (a) -- (a1) -- (b) -- (c2) -- (s) -- (c1) -- (a);
      \draw[] (c1) -- (c2) -- (a1) -- (c1);
    \end{tikzpicture}
    \caption{}\label{fig:tent}    
  \end{subfigure} 
  \caption{Two minimal non-paired threshold graphs.}
  \label{fig:smallgraph}
\end{figure}
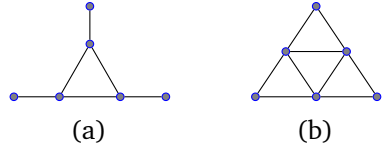

\section{Characterization}\label{section:Chara}
An \emph{ordering} $\sigma$ of the vertex set of a graph $G$ is a bijection from $V(G) \rightarrow \{1, \ldots, n\}$. We use $u <_\sigma v$ to denote that $\sigma(u) < \sigma(v)$.
An ordering $\sigma$ of the vertex set of a graph $G$ is an \emph{umbrella ordering} if for every triple of vertices $x,y,z$,
\[
  x <_\sigma y <_\sigma z\text{ and }xz \in E(G)\text{ imply }xy,yz \in E(G).
\]
Looges and Olariu~\cite{looges-93-greedy-algorithms-uig} showed that a graph is a unit interval graph if and only if it admits an {umbrella ordering}.
Indeed, given a unit interval model of a unit interval graph $G$, ordering the vertices by their left endpoints produces an umbrella ordering of $G$.
Likewise interval graphs can be characterized by the so called \emph{interval orderings} $\sigma$ \cite{ramalingam-88-domination-interval-graphs}: for every triple of vertices $x, y, z$, 
\[
x <_\sigma y <_\sigma z\text{ and }xz \in E(G)\text{ imply }xy \in E(G).
\]
We now formally define the broom ordering.

\begin{definition}\label{def:besom-ordering}
  An ordering $\sigma$ of the vertex set of a graph $G$ is a \emph{broom ordering} if its reversal is an interval ordering of $G$ and there exists $p$ with $0\le p\le n$ such that
  \begin{enumerate}[(i)]
  \item for each of the first $p$ vertices of $\sigma$, its neighborhood is after $p$ and appears consecutively in $\sigma$; and
  \item the sub-ordering of the last $n - p$ vertices is an umbrella ordering.
  \end{enumerate}
\end{definition}

Two remarks on this definition are in order.  For each of the last $n - p$ vertices, its closed neighborhood appears consecutively in $\sigma$.  A graph having a broom ordering has also an interval ordering, hence an interval graph.  Therefore, the following lemma implies Theorem~\ref{thm:PTisInterval}.

\begin{lemma}\label{lemma:pt-besom-ordering}
  Let $G(T_\pm, w)$ be a paired threshold graph. The ordering of $V(G)$ decided by $w$, with ties broken arbitrarily, is a broom ordering.
\end{lemma}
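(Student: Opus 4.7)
The plan is to take the ordering $\sigma$ that lists the vertices in nondecreasing order of weight, set $p = |I|$ where $I = \{v \mid w(v) < T_\pm/2\}$, and verify the three defining conditions of a broom ordering directly from the definition of a paired threshold graph.

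First I would check that the reversal of $\sigma$ is an interval ordering. Suppose $x <_\sigma y <_\sigma z$ with $xz \in E(G)$. By choice of $\sigma$, $w(x) \le w(y) \le w(z)$, and from $xz \in E(G)$ we have $w(x) + w(z) \ge T_\pm$ and $w(z) - w(x) \le T_\pm$. Monotonicity then gives $w(y) + w(z) \ge w(x) + w(z) \ge T_\pm$ and $w(z) - w(y) \le w(z) - w(x) \le T_\pm$, hence $yz \in E(G)$. Translating back through the reversal, this is precisely the interval-ordering condition required by Definition~\ref{def:besom-ordering}.

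Next I would verify condition~(i). The first $p$ positions of $\sigma$ are exactly the vertices of $I$. Fix $v \in I$. Any neighbor $u$ of $v$ satisfies $w(u) \ge T_\pm - w(v) > T_\pm/2$, so $u \notin I$ and therefore appears after position $p$. For consecutiveness, observe that any potential neighbor $u$ must lie in the weight window $T_\pm - w(v) \le w(u) \le T_\pm + w(v)$ (the lower bound comes from the sum threshold, the upper bound from the difference threshold, and the other side of the difference condition is vacuous since $w(v) < T_\pm/2 < T_\pm$). Conversely, any $u$ in that window is adjacent to $v$. Since $N(v)$ is determined by an interval of weights, it forms a contiguous block in $\sigma$.

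Finally, for condition~(ii), I would show the sub-ordering on $V(G)\setminus I = \{v \mid w(v) \ge T_\pm/2\}$ is an umbrella ordering. Take any $x <_\sigma y <_\sigma z$ from this set with $xz \in E(G)$. Because all weights are $\ge T_\pm/2$, the sum threshold $w(a) + w(b) \ge T_\pm$ is automatic for every pair $a,b$ in this set, so the only obstruction to adjacency is the difference threshold. From $xz \in E(G)$ we have $w(z) - w(x) \le T_\pm$, and then $w(y) - w(x) \le w(z) - w(x) \le T_\pm$ and $w(z) - w(y) \le w(z) - w(x) \le T_\pm$, giving both $xy \in E(G)$ and $yz \in E(G)$. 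None of the steps is really an obstacle; the only subtlety worth spelling out carefully is the consecutiveness in condition~(i), which is where one must explicitly observe that the neighborhood of a low-weight vertex is carved out by a two-sided weight window, and that this window is a segment of $\sigma$ because $\sigma$ was chosen monotone in $w$.
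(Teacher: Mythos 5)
Your proof is correct and follows essentially the same route as the paper's: verify the interval-ordering condition by monotonicity of the sum and difference constraints, identify $I=\{v\mid w(v)<T_\pm/2\}$ with the first $p$ positions, and observe that each low-weight vertex's neighborhood is a weight window while the high-weight suffix is an umbrella ordering. You merely spell out conditions (i) and (ii) more explicitly than the paper, which delegates them to the definition of $\sigma$ and Proposition~\ref{prop:unit-interval}; no gap.
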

\begin{proof}
  Let $\sigma$ be the ordering.  We first show that its reversal is an interval ordering.   Let $x, y, z$ be three distinct vertices of $G$ with $x <_\sigma y <_\sigma z$; note that $w(x) \leq w(y) \leq w(z)$.  If $xz \in E(G)$, then by definition, $w(x) + w(z) \ge T_\pm$ and $w(z) - w(x) \le T_\pm$.
  As a result,
  \begin{align*}
    w(y) + w(z) &\ge w(x) + w(z) \ge T_\pm
                  \text{ and}
    \\
    w(z) - w(y) &\leq w(z) - w(x) \le T_\pm,
  \end{align*}
  which imply $yz \in E(G)$.  Therefore, the reversal of $\sigma$ is an interval ordering.
  
  Let $I = \{ v \mid w(v) < T_\pm / 2 \}$.
  If $I$ is empty, then $G$ is a unit interval graph and $\sigma$ is an umbrella ordering of $G$. It is clear that $\sigma$ is a broom ordering with $p = 0$.
  Otherwise, let $p = |I|$. There is an edge between a vertex $v\in I$ and a vertex $u\in V(G)\setminus I$ if and only if
  \[
    T_\pm - w(v) \le w(u) \le     T_\pm + w(v). 
  \]
  Then condition (i) and (ii) follow from the definition of $\sigma$ and Proposition~\ref{prop:unit-interval}, respectively.
\end{proof}

Before proving the main result of this section, we need to take care of disconnected graphs.

\begin{lemma}\label{lem:disconnected}
  A graph $G$ is a paired threshold graph if and only if one component is a connected paired threshold graph and all the others are unit interval graphs.
\end{lemma}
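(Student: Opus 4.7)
The plan is to prove both implications directly. The ``if'' direction is a controlled shifting argument, and the ``only if'' direction is a contradiction argument exploiting the threshold $T_\pm/2$.

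For ``if'', I would start from a representation $G(w, T_\pm)$ of the connected paired threshold component $C$. For every other component $D_i$, which by hypothesis is a unit interval graph, Proposition~\ref{prop:uig-assignment} (after rescaling to match threshold $T_\pm$) supplies an assignment $w_i$ with every weight at least $T_\pm$. I would then combine these by processing the $D_i$ in any order and shifting each by a positive constant $\alpha_i$ large enough that the minimum shifted weight on $D_i$ exceeds every previously committed weight by more than $T_\pm$. Two observations close it out: (i) a nonnegative additive shift inside $D_i$ preserves all weight differences and can only grow weight sums, and since every $w_i$-weight already satisfies $w_i(v) \ge T_\pm$ the internal sum condition never fails, so the internal adjacencies of $D_i$ are preserved; (ii) for any cross-component pair the weight gap exceeds $T_\pm$ by construction, so the pair is a non-edge. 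Hence the combined assignment represents $G$ as a paired threshold graph.

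For ``only if'', I would argue contrapositively: given a representation $G(w, T_\pm)$ and two distinct components $A, B$ that are both non-unit-interval, derive a contradiction. Since $\{v \mid w(v) < T_\pm/2\}$ is an independent set (as noted just before Proposition~\ref{prop:unit-interval}) and $\{v \mid w(v) \ge T_\pm/2\}$ induces a unit interval graph by Proposition~\ref{prop:unit-interval}, neither $A$ nor $B$ can lie entirely on one side of $T_\pm/2$: otherwise the component would be a single isolated vertex or an induced subgraph of a unit interval graph, hence itself unit interval. Connectivity of each then forces a ``crossing'' edge: pick $xy \in E(A)$ with $w(x) < T_\pm/2 \le w(y)$ and $x'y' \in E(B)$ with $w(x') < T_\pm/2 \le w(y')$. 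Assume without loss of generality $w(y) \le w(y')$. Since $yy'$ is a non-edge but $w(y) + w(y') \ge T_\pm$, we must have $w(y') - w(y) > T_\pm$; combining this with $w(y') - w(x') \le T_\pm$ from the edge $x'y'$ yields $w(x') \ge w(y') - T_\pm > w(y) \ge T_\pm/2$, contradicting $w(x') < T_\pm/2$.

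The main obstacle is choosing the witnesses in the ``only if'' direction: naively comparing the two low-weight endpoints $x, x'$ across the components yields no useful inequality, and one has to route the argument through the heavier of the two high-weight endpoints $y, y'$, using the non-edge across the components together with the internal edge of the same component to force a low-weight vertex above $T_\pm/2$. The ``if'' direction, by contrast, is essentially bookkeeping once one notices that the strong form of Proposition~\ref{prop:uig-assignment} (all weights $\ge T_\pm$) makes the sum condition robust under upward shifts, so only the difference-based non-adjacency between components has to be engineered.
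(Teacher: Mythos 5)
Your proposal is correct in both directions. The ``if'' direction is essentially the paper's argument: both invoke Proposition~\ref{prop:uig-assignment} to obtain an assignment with all weights at least $T_\pm$ on the unit interval part and then shift it so far above the paired threshold component that every cross-component difference exceeds $T_\pm$; the paper does this in a single step by treating $G-C$ as one unit interval graph, while you stack the components one at a time, which changes nothing of substance. In the ``only if'' direction you take a genuinely different, more elementary route: the paper funnels the argument through Lemma~\ref{lemma:pt-besom-ordering}, sorting the four witness vertices by weight and using the interval-ordering property of the reversed broom ordering to force a cross-component edge, whereas you work directly with the defining inequalities, choosing a crossing edge $xy$ and $x'y'$ in each offending component and deriving from the non-edge $yy'$ (whose sum condition holds, so the difference condition must fail) together with the edge $x'y'$ the contradiction $w(x') \ge w(y') - T_\pm > w(y) \ge T_\pm/2$. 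The mathematical content is the same monotonicity of adjacency in the weights, but your version is self-contained and does not presuppose the broom-ordering machinery, at the modest cost of re-deriving a special case of it; your identification of the right witnesses --- routing through the heavier of the two high-weight endpoints rather than comparing the low-weight ones --- is exactly the step that makes the direct computation go through, and your justification that each non-unit-interval component must straddle $T_\pm/2$ and therefore contain a crossing edge is sound, since the low-weight vertices form an independent set and each component is connected.
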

\begin{proof}
  For the if direction, let $G[C]$ be the connected paired threshold graph, and let $w, T_\pm$ be an assignment for $G[C]$.  Note that $G - C$ is a unit interval graph, for which we find a weight assignment $w'$ using Proposition~\ref{prop:uig-assignment}.  
  We set $w(v) = w'(v) + \max_{c\in C}w(c) + 1$ for each $v\in V(G)\setminus C$.   
  It is easy to verify that $w$ is a valid paired threshold assignment for $G$, and thus $G$ is a paired threshold graph.

  We now prove the only if direction.  Suppose for contradiction that there are two components, $G[C_1]$ and $G[C_2]$, that are not unit interval graphs.  We fix a weight assignment $w$ and threshold  $T_\pm$ of $G$.  By Proposition~\ref{prop:unit-interval}, there must be two vertices $u_1\in C_1$ and $u_2\in C_2$ such that $w(u_1), w(u_2)< T_\pm/2$.
  We can find a neighbor $v_1$ of $u_1$ and a neighbor $v_2$ of $u_2$; otherwise the component consists of a single vertex and is a unit interval graph.
  By Lemma~\ref{lemma:pt-besom-ordering}, the assignment $w$ decides a broom ordering $\sigma$ of $G$, in which, we may assume without loss of generality that  $v_1 <_\sigma v_2$.
  Then, we have that $u_1 <_\sigma u_2 <_\sigma v_1 <_\sigma v_2$ or $u_2 <_\sigma u_1 <_\sigma v_1 <_\sigma v_2$.
  Since the reversal of $\sigma$ is an interval ordering, at least one of $u_1 v_2$ and $u_2 v_1$ is an edge of $G$.  This contradicts that they (two ends of the edge) are from different components.  
\end{proof}

Another way to characterize paired threshold graphs is through partition, with the focus on the connection between two parts.
\begin{definition} \label{def:pt-partition}
  A partition $I\uplus U$ of a graph $G$ is a paired threshold partition of $G$ if 
  \begin{enumerate}[(i)]
  \item $I$ is an independent set, $G[U]$ is a unit interval graph, and $N[I]$ induces a threshold graph;
  \item $G[U]$ has an umbrella ordering $\sigma$ in which $N(v)$ appears consecutively for each $v\in I$; and
  \item  $N(I) \subseteq N[u]$, where $u$ is the first vertex of $\sigma$.
  \end{enumerate}
\end{definition}

\begin{figure}[h]
  \begin{tikzpicture}[scale=0.35]
    \small
    \draw[dashed] (-1, 0) -- (33, 0);

    \foreach \x in {0,1,...,12} {
      \node[vertex, fill=cyan] (u\x) at (\x, 0.5+\x/2) {};
      \draw[thin, gray, -{|[left]}] (u\x) to (\x+12.5, 0.5+\x/2);
    }
    \foreach[count=\i from 0] \x in {13,14,15,20} {
      \node[vertex, fill=cyan] (u\x) at (\x+1, 0.5+\i/2) {};
      \draw[thin, gray, -{|[left]}] (u\x) to (\x+13.5, 0.5+\i/2);
    }
    \draw[dashed,gray,thin] (12.5, 0) -- (12.5, 7.5);

    \node[vertex, black, label=above:$v_1$] (v_1) at (-5, 8) {};
    \foreach \x in {4,5,6}
    {\draw[thin, black, thick] (v_1) edge (u\x);}

    \node[vertex, violet, label=above:$v_2$] (v_i) at (-3.5, 8.5) {};
    \foreach \x in {4,...,8}
    {\draw[thin, violet] (v_i) edge (u\x);}

    \node[vertex, olive, "$v_3$"] (v_p) at (-2,8.9) {};
    \foreach \x in {3,...,10}
    {\draw[thin, cyan] (v_p) edge (u\x);}

    \draw[dashed, gray] (-3.,8.3) ellipse (3 and 2.); \node at (-4.,6.) {$I$};

    \node[left] at (u0) {$v_4$};
    \node[left] at (u3) {$v_7$};
    \node[left] at (u4) {$v_8$};
    \node[above left] at (u12) {$v_{16}$};
    \node[left] at (u20) {$v_{20}$};        
  \end{tikzpicture}
  \caption{Illustration for the proof of Theorem~\ref{thm:Characterization} $(3) \Rightarrow (1)$, where $p = 3$, $q = 8$, $t = 16$, $n = 20$, and $T_\pm = 2 * 20^2 = 800$.  The weights for the vertices derived from the proof are $20$, $40$, $60$, $724$, $725$, $726$, $747$, $ 808$, $809$, $810$, $831$, $832$, $853$, $854$, $ 875$, $876$, $1525.85$, $1544.9$, $1604.95$, and $1653$, in order.}
  \label{fig:example of thm}
\end{figure}
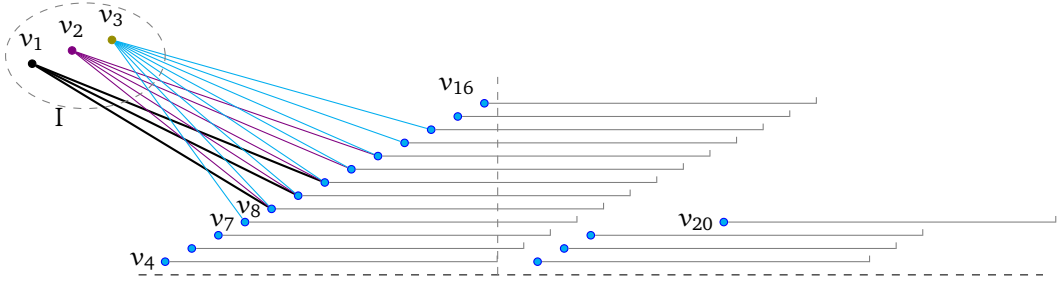

The following theorem implies Theorem~\ref{thm:broom-ordering-for-pt}.

\begin{theorem}\label{thm:Characterization}
  The following are equivalent on a graph $G$.
  \begin{enumerate}[(1)]
  \item $G$ is a paired threshold graph.
  \item $G$ admits a broom ordering.
  \item $G$ has a paired threshold partition.
  \end{enumerate}
\end{theorem}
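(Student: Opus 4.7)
I would prove the cyclic chain $(1) \Rightarrow (2) \Rightarrow (3) \Rightarrow (1)$. The first implication is immediate from Lemma~\ref{lemma:pt-besom-ordering}, which already shows that ordering $V(G)$ by the weights of a paired threshold representation yields a broom ordering.

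For $(2) \Rightarrow (3)$, let $\sigma$ be a broom ordering with split position $p$, and take $I$ to be the first $p$ vertices and $U$ the remaining $n-p$. The set $I$ is independent because condition~(i) of Definition~\ref{def:besom-ordering} places the neighborhood of every $I$-vertex entirely inside $U$, while $G[U]$ is a unit interval graph via the umbrella ordering guaranteed by condition~(ii). The consecutive-neighborhood clause of the paired threshold partition is just condition~(i) of the broom ordering restated. The more delicate point is showing that $N[I]$ induces a threshold graph, for which I would exploit the fact that the reversal of $\sigma$ is an interval ordering: applied to a triple $v_i <_\sigma v_j <_\sigma w$ with $v_i, v_j \in I$ and $w \in N(v_i)$, this forces $v_j w \in E(G)$, so the neighborhoods of $I$-vertices are linearly nested; a symmetric use on $v <_\sigma y <_\sigma z$ with $v \in I$ and $y, z \in N(I)$ gives $y z \in E(G)$, so $N(I)$ is a clique. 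Condition~(iii) of Definition~\ref{def:pt-partition} follows by instantiating the same inference with $u$, the first vertex of $U$, sandwiched between $v \in I$ and $w \in N(v)$.

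The main technical step is $(3) \Rightarrow (1)$: producing an explicit paired threshold representation from the partition. After relabelling $I = \{v_1, \ldots, v_p\}$ so that $N(v_1) \subseteq \cdots \subseteq N(v_p)$ (possible since $N[I]$ is threshold) and writing $U = \{u_1, \ldots, u_m\}$ in the umbrella order, condition~(ii) delivers $N(v_i) = \{u_{a_i}, \ldots, u_{b_i}\}$ with $a_1 \geq \cdots \geq a_p$ and $b_1 \leq \cdots \leq b_p$. The plan is to pick $T_\pm$ large (concretely $T_\pm = 2n^2$ as in Figure~\ref{fig:example of thm}) and set $w(v_i) = i n$, so the values $c_i := w(v_i)$ are strictly increasing and at most $T_\pm/2$. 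For each $u_j$ I would set $w(u_j) = T_\pm + \phi(j)$ for a non-decreasing function $\phi$ designed so that $|\phi(j)| \leq c_i$ holds exactly when $j \in [a_i, b_i]$, and $|\phi(j) - \phi(k)| \leq T_\pm$ holds exactly when $u_j u_k \in E(G)$. Concretely I would partition the indices into layers according to which $v_i$ each $u_j$ sees, place $|\phi(j)|$ into the nested intervals $(c_{k-1}, c_k]$ (with $c_0 = 0$ and an outer layer beyond $c_p$), attach sign according to which side of the central range $[a_1, b_1]$ the index $j$ lies on, and insert small unit-scale offsets within each layer to encode the umbrella adjacencies inside $U$.

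The principal obstacle is the simultaneous compatibility of the layer constraint $|\phi(j)| \leq c_i \Longleftrightarrow j \in [a_i, b_i]$ with the unit interval constraint $|\phi(j) - \phi(k)| \leq T_\pm \Longleftrightarrow u_j u_k \in E(G)$. The structural ingredient making them compatible is condition~(iii) of the partition: since $N(I) \subseteq N[u_1]$, the entire set $N(I)$ lies inside the closed neighborhood of a single $U$-vertex and hence forms a clique, so its weights can be packed into a window of width less than $T_\pm$ centered near $T_\pm$, while the remaining $U$-vertices sit symmetrically on the two far sides with pairwise differences exceeding $T_\pm$ exactly across the non-edges. Once $\phi$ is constructed, verification reduces to a case analysis of the three edge types --- inside $I$ (an independent set), between $I$ and $U$ (matched by layers), and inside $U$ (matched by the unit interval offsets) --- each of which becomes a routine check against the constructed weights.
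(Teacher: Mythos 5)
Your overall strategy coincides with the paper's: $(1)\Rightarrow(2)$ by Lemma~\ref{lemma:pt-besom-ordering}, $(2)\Rightarrow(3)$ by exploiting that the reversal of a broom ordering is an interval ordering, and $(3)\Rightarrow(1)$ by an explicit layered weight assignment with $T_\pm=2n^2$. The first two implications are correct as you state them and match the paper's argument essentially verbatim.

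The gap is in $(3)\Rightarrow(1)$, and it sits at exactly the hardest point of the construction. You pin down $w(u_j)=T_\pm+\phi(j)$ so that the layer constraint $|\phi(j)|\le c_i\Leftrightarrow j\in[a_i,b_i]$ encodes the $I$--$U$ edges, and then appeal to ``small unit-scale offsets within each layer'' to encode the adjacencies inside $U$. But the $U$-vertices split into two very different pieces, not the ``two symmetric far sides'' you describe: the prefix $N[u_1]$ of the umbrella ordering, which contains all of $N(I)$ by condition (iii) and is a clique (so there are no non-edges to encode there), and the tail beyond $N[u_1]$, which can be arbitrarily long and carries every non-edge of $G[U]$. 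For a tail vertex $u_j$ whose earliest neighbour in the umbrella order is $u_{s}$, you must arrange $w(u_j)-w(u_{s-1})>T_\pm\ge w(u_j)-w(u_{s})$, where $w(u_{s-1})$ and $w(u_{s})$ have already been forced by the layer constraints and may differ by as little as $1$; moreover several tail vertices can share the same $s$ and must be ordered consistently inside that single gap. Unit-scale offsets cannot achieve this; one needs sub-unit interpolation into the interval $(w(u_{s-1}),w(u_{s})]$, which is precisely what the paper's recursive formula \eqref{eq:w2}, namely $w(v_i)=2n^2+w(v_{s(i)-1})+\frac{i}{n}\big(w(v_{s(i)})-w(v_{s(i)-1})\big)$, accomplishes, followed by an inductive monotonicity check using $s(i)\le s(i+1)$. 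Until you specify $\phi$ on the tail at this level of precision and verify the resulting inequalities, the implication $(3)\Rightarrow(1)$ is not established.
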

\begin{proof}
  
  $(1) \Rightarrow (2)$ This has been proved in Lemma~\ref{lemma:pt-besom-ordering}.
  
  $(2) \Rightarrow (3)$ Let $\sigma$ be a broom ordering of $G$ with integer $p$.
  We take the first $p$ vertices in $\sigma$ as $I$ and the rest as $U$.  Note that $I$ is an independent set, and $G[U]$ is a unit interval graph.  Condition (ii) of Definition~\ref{def:pt-partition} follows directly from Definition~\ref{def:besom-ordering}: The sub-ordering of $\sigma$ restricted to $U$ is an umbrella ordering of $G[U]$, in which the neighborhood of each vertex of $I$ appears consecutively.  
To show that $N(I)$ is a clique, consider two vertices $x, y$ in $N(I)$ with $x <_\sigma y$.  Let $v \in N(y)\cap I$.  Clearly, $v <_\sigma x <_\sigma y$, which implies $x y \in E(G)$ because $\sigma$ is the reversal of an interval ordering. 
  We argue then that the neighborhoods of vertices in $I$ form a total order under the containment relation.  This is vacuous when $p \le 1$.   Now let $u, v$ be any two vertices in $I$ with $u <_\sigma v$.  For any $x \in N(u)$, we have $u <_\sigma v <_\sigma x$; since the reversal of $\sigma$ is an interval ordering of $G$, this implies  $x$ is in $N(v)$ as well. 
  Therefore, $N[I]$ induces a threshold graph and we have verified condition (i) of  Definition~\ref{def:pt-partition}.
  Condition (iii) of  Definition~\ref{def:pt-partition} is vacuously true when $U$ is empty.  Otherwise, let $x$ be the first vertex of $U$ in $\sigma$.  Then for any $v \in I$ and $y \in N(v)\setminus \{x\}$, we have $v <_\sigma x <_\sigma y$.  Then $y \in N[x]$ because $\sigma$ is the reversal of an interval ordering of $G$.  We have thus concluded that $I\uplus U$ is a paired threshold partition of $G$.
  
  $(3) \Rightarrow (1)$   Let $I\uplus U$ be a paired threshold partition of $G$.  We find a weight assignment $w$ for $G$ with threshold $T_\pm = 2n^2$.
  We may assume without loss of generality that $G$ is connected: Otherwise, by Lemma~\ref{lem:disconnected}, we focus on the only component of $G$ that is not a unit interval graph.  Note that $u$ belongs to this component because the component intersects both $I$ and $U$ and condition (iii) of Definition~\ref{def:pt-partition}.  Therefore, the partition restricted to the component remains a valid paired threshold partition.
  On the other hand, this direction holds vacuously when $I$ is empty (when $G$ is a unit interval graph) or when $U$ is empty (when $|I| = 1$).  We may hence assume that neither of $I$ and $U$ is empty.
  
  Let $p = |I|$; note that $0< p < n$.   We may number the vertices in $G$ such that $I = \{ v_1, \ldots, v_p \}$ and $N(v_1) \subseteq \cdots \subseteq N(v_p)$, and $\sigma = \langle v_{p+1}, v_{p+2}, ..., v_{n} \rangle$.   Let $q$ be the smallest index such that $v_q \in N(v_1)$, and let $t$ be the largest index such that $v_t\in N[v_{p + 1}]$; they ($q$ and $t$) exist because $G$ is connected.
  If $t = n$, then $U$ is a clique and $G$ is a threshold graph; hence a paired threshold graph.
  In the rest we assume $p+1 \le q \le t < n$.
  For each $i$ with $p+1 \le i \le n$, let $s(i)$ denote the smallest index such that $v_{s(i)} v_i\in E(G)$, which exists because $G$ is connected.
  We remark that if $v_i \in N(I)$, then $1 \le s(i) \le p$; and if $t < i \le n$, then $p+1 < s(i) < i$.
  %then $p < s(i) < i$.
  Moreover, if $t < i < j \le n$, then $s(i) \le s(j)$.
  		
  For each $i$ with $1 \le i \le p$, we set $w(v_i) = i\cdot n$.  For each $i$ with $p+1 \le i \le t$, we set
  \begin{equation}
    \label{eq:w}
    w(v_i) =
    \begin{cases}
      (2n - p)n - ({n - i}) & \text{ if } i < q, v_i\not\in N(I),
      \\
      (2n - s(i))n + {i} & \text{ if } i < q, v_i\in N(I),
      \\
      (2n + s(i))n - ({n - i}) & \text{ if } i \ge q, v_i\in N(I),
      \\
      (2n + p)n + {i} & \text{ if } i > q, v_i\not\in N(I).
    \end{cases}
  \end{equation}
  Finally, for each $i$ with $t < i \leq n$, we set
  \begin{equation}
    \label{eq:w2}
    w(v_i) = 2n^2 + w(v_{s(i)-1}) + \frac{i}{n} \big(w(v_{s(i)}) - w(v_{s(i)-1})\big),
  \end{equation}
  which is well defined because $1 \le p < s(i) < i$. See Figure~\ref{fig:example of thm} for an example of the assignment.
  
  We claim that the weights are increasing with the vertex numbers.  First, $w(v_1) < \cdots < w(v_p) = p n < n^2$.  On the other hand, $w(v_{p + 1}) > n^2$ can be read directly from the weight assignment in \eqref{eq:w}.  Now let $i, j$ be two numbers with $p+1 \le i < j < q$; note that if $v_i\in N(I)$ then we must have $v_j\in N(I)$ as well by the selection of $q$ and the numbering of vertices.
  \begin{itemize}
  \item Case 1, $v_i, v_j \notin N(I)$.   We have $w(v_i) < w(v_j)$ because $w(v_j) - w(v_i) = \big( (2n-p)n - (n-j) \big) - \big( (2n-p)n - (n-i) \big) = j - i > 0$.
  \item  Case 2, $v_i \notin N(I)$ and $v_j \in N(I)$.  Noting that $s(j) \le p$, we have $w(v_j) - w(v_i) = \big( (2n - s(j))n + j \big) - \big( (2n - p)n - (n-i) \big) \ge n + j - i > 0$, i.e., $w(v_i) < w(v_j)$.
  \item  Case 3, $v_i, v_j \in N(I)$.  Note that ${s(i)}, {s(j)} \le p$.  Then $v_q\in N(v_1) \subseteq N(v_{s(i)})$ because the selection of $q$ and the numbering of the vertices respectively.  Now that $v_i, v_q \in N(v_{s(i)})$, we have that $v_j$ is in $N(v_{s(i)})$ as well because $i < j < q$ and  $N(v_{s(i)})$ appears consecutively in $\sigma$.  As a result, $s(j) \le s(i)$ and $w(v_j) - w(v_i) = \big( (2n -s(j))n + j \big) - \big( (2n - s(i))n + i \big) = (s(i) - s(j))n + j - i > 0$.
  \end{itemize}
  For any $p< i < q$, it is directly from \eqref{eq:w} that $w(v_i) < 2 n^2 < w(v_q)$.  Now suppose $q \le i < j \le t$; note that if $v_i\not\in N(I)$ then we must have $v_j\not\in N(I)$ as well by the selection of $q$ and the numbering of vertices. 
  \begin{itemize}
  \item Case 1, $v_i, v_j \in N(I)$.  Note that ${s(i)}, {s(j)} \le p$.  Then $v_q\in N(v_1) \subseteq N(v_{s(j)})$ because the selection of $q$ and the numbering of the vertices respectively.  Now that $v_j, v_q \in N(v_{s(j)})$, we have $v_i$ is in $N(v_{s(j)})$ as well because $q \le i < j$ and  $N(v_{s(j)})$ appears consecutively in $\sigma$.  As a result, $s(i) \le s(j)$ and $w(v_j) - w(v_i) = \big( (2n + s(j))n - (n - j) \big) - \big( (2n + s(i))n - (n - i) \big) = (s(j) - s(i))n + j - i > 0$.
  \item Case 2, $v_i \in N(I)$ and $v_j \notin N(I)$.  Noting that $s(i) \le p$, we have $w(v_j) - w(v_i) = \big( (2n + p)n + j\big) - \big( (2n + s(i))n - (n - i) \big) \ge j + n - i > 0$.
  \item Case 3, $v_i, v_j \notin N(I)$.
  Then, $w(v_j) - w(v_i) = \big( (2n + p)n + j \big) - \big( (2n + p)n + i \big) = j - i > 0$. 
  \end{itemize}
  Since $ p+1 \le t < n$, it follows directly from \eqref{eq:w} that $w(v_t) < 3n^2$, whether $v_t \in N(I)$ or not.  On the other hand, $s(t+1) > p+1$ by the selection of $t$, and hence $w(v_{t+1}) = 2n^2 + w(v_{s(t+1)-1}) + \frac{t+1}{n} \big(w(v_{s(t+1)}) - w(v_{s(t+1)-1})\big) > 2n^2 + w(v_{p+1}) > 3n^2$.  
 For the last part, i.e., $t+1, \ldots, n$, we show by induction that $w(v_{i + 1}) > w(v_i)$ for all $i$ with $t < i < n$.  By the selection of $t$, we have $s(i), s(i+1) > p+1$.  On the other hand, $s(i) \le s(i+1) \le i$ because $\sigma$ is an umbrella ordering of $G[U]$ and it is connected.  
 Thus, $w(v_{i+1}) - w(v_i) = \frac{n-({i+1})}{n} w(v_{s({i+1})-1}) + \frac{i+1}{n} w(v_{s({i+1})}) - \frac{n-i}{n} w(v_{s(i)-1}) - \frac{i}{n} w(v_{s(i)}) \ge \frac{n-({i+1})}{n} w(v_{s({i})-1}) + \frac{i+1}{n} w(v_{s({i})}) - \frac{n-i}{n} w(v_{s(i)-1}) - \frac{i}{n} w(v_{s(i)}) = \frac{1}{n} w(v_{s(i)}) - \frac{1}{n} w(v_{s(i)-1}) > 0$.

  Putting them together, we have
  \begin{equation}
    \tag{$\star$}
    \label{eq:weight-order}
    w(v_1) < \cdots < w(v_p) < n^2 < w(v_{p + 1}) < \cdots w(v_t) < 3 n^2 < w(v_{t+1}) < \cdots < w(v_n).
  \end{equation}
   
  It remains to verify that $w$ is a valid assignment for $G$---i.e., for each pair of $i,j$ with $1\le i< j\le n$,
  \[
    v_i v_j \in E(G) \quad\text{if and only if}\quad  w(v_j) - w(v_i) \le T_\pm\le w(v_j) + w(v_i).
  \]

  Case 1, $i< j\le p$ (i.e., $v_i, v_j \in I$).  In this case $v_iv_j \notin E(G)$ and $w(v_i) + w(v_j) < 2n^2$.  

  Case 2, $i\le p< j$ (i.e., $v_i \in I$ and  $v_j \in U$).  We break into five sub-cases; the first four of them correspond to the assignment in \eqref{eq:w}.
  \begin{itemize}
  \item Case 2.1, $j < q$ and $v_j \notin N(I)$.  In this case, $v_iv_j \notin E(G)$ and
    $
    w(v_i) + w(v_j) = i n + (2n-p)n - (n-j) = 2 n^2 - (p - i) n - (n - j) < 2n^2.
    $
  \item Case 2.2, $j < q$ and $v_j \in N(I)$.  From the definition of $s(j)$ and the assumption $N(v_1) \subseteq \cdots \subseteq N(v_p) $, we can infer that $v_iv_j \in E(G)$ if and only if $s(j) \leq i$, which holds if and only if $w(v_i) + w(v_j) = i \cdot n + (2n-s(j))n + j  = 2 n^2 + (i - s(j)) n + j> 2n^2$.  On the other hand, we always have $w(v_j) - w(v_i) = (2n-s(j))n + j - i \cdot n < 2 n^2$ in this case.
  \item Case 2.3, $q \le j\le t$ and $v_j \in N(I)$.  Similar as case 2.2,  $v_iv_j \in E(G)$ if and only if $s(j) \leq i$, which holds if and only if $w(v_j) - w(v_i) =  (2n+s(j))n - (n-j) - i \cdot n = 2 n^2 - (i - s(j)) n - (n - j)< 2n^2$.  Note that we always have $w(v_i) + w(v_j) = i \cdot n + (2n+s(j))n - (n-j) > 2n^2$.
  \item Case 2.4, $q < j \le t$ and $v_j \notin N(I)$.  In this case, $v_iv_j \notin E(G)$ and $w(v_j) - w(v_i) = (2n+p)n + j - i \cdot n = 2n^2 + (p - i) n + j > 2n^2$.
  \item Case 2.5, $t < j \le n$.  By condition (3), $v_j \notin N(I)$, which means $v_i v_j\not\in E(G)$.  Then $s(j) > p + 1$ and $w(v_j) - w(v_i) > 2n^2 + w(v_{p+1}) - w(v_i) > 2n^2$, by \eqref{eq:weight-order}.
  \end{itemize}

  Case 3, $p< i< j \le n$ (i.e., $v_i, v_j \in U$).  Note that in this case $v_i v_j\in E(G)$ if and only if $s(j) \le i$.  Since it always holds that $w(v_i) + w(v_j) > 2n^2$, here we focus on the difference $w(v_j) - w(v_i)$.  Note that $v_iv_j \in E(G)$ when $j \le t$, and then by \eqref{eq:weight-order}, $w(v_j) - w(v_i) < 2n^2$.  In the rest, $t< j$.  If $v_i v_j\not\in E(G)$, i.e., $i < s(j)$, then $w(v_j) - w(v_i) = 2n^2+ w(v_{s(j)-1}) + \frac{j}{n} \big(w(v_{s(j)}) - w(v_{s(j)-1})\big) - w(v_{i}) \ge 2n^2+ w(v_{s(j)-1}) + \frac{j}{n} \big(w(v_{s(j)}) - w(v_{s(j)-1})\big) - w(v_{s(j) - 1}) > 2n^2$.  Otherwise,  $v_i v_j\in E(G)$, then $s(j)\le i$, and $w(v_j) - w(v_i) = 2n^2+ w(v_{s(j)-1}) + \frac{j}{n} \big(w(v_{s(j)}) - w(v_{s(j)-1})\big) - w(v_{i}) \le 2n^2 - \big( w(v_{s(j)}) - w(v_{s(j)-1}) \big) + \frac{j}{n} \big(w(v_{s(j)}) - w(v_{s(j)-1})\big) = 2n^2 - \frac{n - j}{n} \big(w(v_{s(j)}) - w(v_{s(j)-1})\big) \le 2n^2$.

Thus, $w$ is a valid assignment for $G$, and this completes the proof.
\end{proof}

The paired threshold partition $I \uplus U$ of a paired threshold graph is not necessarily unique: For example, if the first vertex in $U$ is nonadjacent to any vertex in $I$, then we may move it from $U$ to $I$ to make a new partition.  We say that a paired threshold partition  $I \uplus U$ is a \emph{canonical partition} if $I$ is maximal in all partitions,---i.e., $(I \cup \{v\}) \uplus (U \setminus \{v\})$ is not a valid paired threshold partition for any $v\in U$.  The following characterizes {canonical partitions}.

\begin{proposition}\label{prop:CanonicalProperty}
  Let $I \uplus U$ be a canonical partition of a paired threshold graph $G$.  If a vertex $v \in U\setminus N(I)$ is simplicial in $G$, then
  \begin{enumerate}[(1)]
  \item $N(I) \nsubseteq N(v)$; and
  \item the subgraph induced by $N[I \cup \{v\}]$ is not a threshold graph.
  \end{enumerate}
\end{proposition}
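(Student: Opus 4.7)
I would prove both conclusions by contradiction: if either fails, I enlarge $I$ by moving a carefully chosen vertex into it, contradicting canonicity. Fix an umbrella ordering $\sigma$ of $G[U]$ witnessing the partition and let $u$ be its first vertex, so $N(I) \subseteq N[u]$ by condition~(iii) of Definition~\ref{def:pt-partition}. Two standing facts do most of the work: $N(v)$ is a clique (since $v$ is simplicial in $G$ and $v \notin N(I)$ forces $N_G(v) \subseteq U$), and $N[v]$ occupies a consecutive block of $\sigma$ (a general property of umbrella orderings).

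\textbf{Proof of~(1).} Assume $N(I) \subseteq N(v)$. Three subcases. If $v = u$, or if $v \ne u$ and $u \in N(v)$, I move $v$ into $I$ and use $\sigma' = \sigma|_{U \setminus \{v\}}$: its first vertex (the second vertex of $\sigma$ in the former subcase, $u$ itself in the latter) lies in the clique $N[v]$, so $N(I \cup \{v\}) = N(v) \subseteq N[u']$ follows by cliqueness. Conditions (i)--(ii) of Definition~\ref{def:pt-partition} are routine: $I \cup \{v\}$ stays independent because $v \notin N(I)$, $N[I \cup \{v\}]$ is threshold because its independent side has nested neighborhoods inside the clique $N(v)$, and the consecutiveness of $N(w)$ in $\sigma'$ for $w \in I$ survives because $v \notin N(w)$. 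In the remaining subcase $v \ne u$ and $u \notin N(v)$, note $u \notin N(I)$ (otherwise $u \in N(I) \subseteq N(v)$, contradicting $u \notin N(v)$), so $N(I) \subseteq N(u)$ by~(iii); I then move $u$ (simplicial in $G$ since $N_G(u) = N_{G[U]}(u)$ is a clique when $u \notin N(I)$) into $I$ instead, and the mirror argument using the clique $N[u]$ and the second vertex of $\sigma$ establishes the enlarged partition.

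\textbf{Proof of~(2).} Assume $N[I \cup \{v\}]$ is a threshold graph. I invoke the fact that any two non-adjacent vertices of a threshold graph have nested open neighborhoods---a short check using the absence of induced $P_4$ and $2K_2$ on the four-vertex witness $\{w,x,v,y\}$ with $x \in N(w) \setminus N(v)$ and $y \in N(v) \setminus N(w)$. Applied to $v$ and each $w \in I$ (non-adjacent because $v \notin N(I)$), this gives, for every $w \in I$, either $N(v) \subseteq N(w)$ or $N(w) \subseteq N(v)$. If the first alternative holds for some $w$, then $N(v) \subseteq N(I)$, so $N(I \cup \{v\}) = N(I) \subseteq N[u]$ by the original~(iii), and moving $v$ into $I$ succeeds by the umbrella-restriction verification from~(1) (the cliqueness of $N[v]$ covers the $v = u$ corner, giving $N(I) \subseteq N[u_2]$). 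Otherwise $N(w) \subseteq N(v)$ for every $w \in I$, so $N(I) \subseteq N(v)$---precisely the hypothesis of~(1), and the construction there produces the enlarged partition.

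\textbf{Main obstacle.} Condition~(iii) of the enlarged partition is the delicate point: the new first vertex must be adjacent to (or equal to) every vertex of $N(I \cup \{v\})$, which is a multi-vertex requirement. The unifying trick is that in every case the set to be dominated is contained in a single clique---either $N[v]$ when $v$ is moved, or $N[u]$ when $u$ is moved---so the verification collapses to showing that the new first vertex belongs to that clique, which is guaranteed by the block structure of the umbrella ordering. The only nontrivial bookkeeping is deciding whether to move $v$ or $u$ in the ``bad'' subcase of~(1), and this is dictated by whether $u \in N(v)$.
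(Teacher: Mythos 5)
Your proof is correct and follows essentially the same route as the paper's: assume a conclusion fails and contradict canonicity by exhibiting the enlarged partition $(I\cup\{v\})\uplus(U\setminus\{v\})$, with condition (iii) of Definition~\ref{def:pt-partition} secured by the cliqueness of the simplicial neighborhood and part (2) reduced to part (1) via nested neighborhoods of non-adjacent vertices in a threshold graph. The paper merely streamlines your case analysis by first observing that $u\in N(I)$ (otherwise $(I\cup\{u\})\uplus(U\setminus\{u\})$ already contradicts canonicity), which disposes of your subcases $v=u$ and $u\notin N(v)$ at the outset.
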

\begin{proof}
  Let $\sigma$ be an umbrella ordering of $G[U]$ specified in Definition~\ref{def:pt-partition}, and let $u$ be the first vertex of $\sigma$.  Note that $u\in N(I)$; otherwise, we can produce another paired threshold partition $(I \cup \{u\}) \uplus ( U \setminus \{u\})$, contradicting that $I \uplus U$ is a canonical partition.  Hence $u \ne v$.  We show that if either of (1) and (2) is not true, then $I' \uplus U'$, where $I' = I \cup \{v\}$ and $U' = U \setminus \{v\}$, is also a valid paired threshold partition.  Note that $I'$ is an independent set of $G$.  Let $\sigma'$ be the ordering obtained from $\sigma$ by removing $v$.  By the definition of umbrella orderings, vertices in $N(v)$ appear consecutively in $\sigma'$.  % $I \cup \{x\}$

(1) Suppose that $N(I) \subseteq N(v)$. 
Note that $u \in N(v)$ because $u \in N(I)$.
Since $v$ is a simplicial vertex, $N[v]$ is a maximal clique of $G$; it contains $u$, hence $N[v] \subseteq N[u]$.  Therefore, partition $I' \uplus U'$ satisfies conditions of Definition~\ref{def:pt-partition}.

(2) Suppose that $N[I']$ induces a threshold graph.  We are already in the previous case if $N(I) \subseteq N(v)$; hence we assume $N(I) \nsubseteq N(v)$.  Then by the definition of threshold graphs, we must have $N(v) \subset N(I)$, and thus $N(I') = N(I) \subseteq N[u]$.
Therefore, partition $I' \uplus U'$ satisfies conditions of Definition~\ref{def:pt-partition}.
\end{proof}

If we drop the unit-length from the definition of unit interval graphs, then we end with {interval graphs},---i.e., it allows intervals of arbitrary lengths.
A graph $G$ is an interval graph if and only if its maximal cliques can be arranged in a sequence such that for every vertex $v \in V(G)$, the set of nodes containing $v$ occur consecutively in the sequence \cite{gilmore-64-comparability-and-interval}.  This sequence is called a \emph{clique path} of $G$.  It is known that a connected unit interval graph has a unique clique path, up to full reversal \cite{deng-96-proper-interval-and-cag,corneil-04-recognize-uig}. 

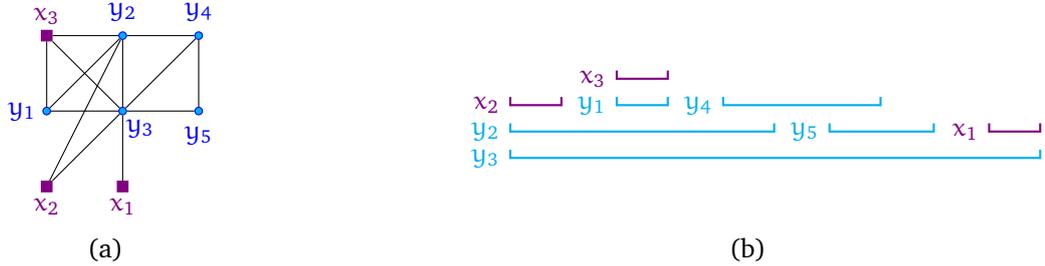
\begin{figure}[h]
  \centering\small
\begin{subfigure}[b]{0.35\linewidth}
    \centering
  \begin{tikzpicture}[, every path/.style={}]
    \begin{scope}[every node/.style={uvertex}]
    \node[fill=violet, "$x_1$" below](x1) at (1,0) {};
    \node[fill=violet, "$x_2$" below](x2) at (0,0) {};
    \node[fill=violet, "$x_3$" above](x3) at (0,2) {};
    \end{scope}
      
    \begin{scope}[every node/.style={vertex}]
    \node[fill=cyan, "$y_1$" left] (y1) at (0,1) {};
    \node[fill=cyan, "$y_2$" above] (y2) at (1,2) {};
    \node[fill=cyan, "$y_4$" above] (y3) at (2,2) {};
    \node[fill=cyan, "$y_5$" below] (y4) at (2,1) {};
    \node[fill=cyan, "$y_3$" below right] (y5) at (1,1) {};
    \end{scope}

    \draw (x1) -- (y5) (y2) -- (x2) -- (y5);
    \draw (y1) -- (x3) -- (y2) (x3) -- (y5);

    \draw (y1) -- (y5) -- (y2) (y3) -- (y5) -- (y4);
    \draw (y1) -- (y2) -- (y3) -- (y4);
  \end{tikzpicture}
    \caption{\empty}
  \end{subfigure} 
  \qquad
  \begin{subfigure}[b]{0.55\linewidth}
    \centering
  \begin{tikzpicture}[scale=.7]
  \draw[{|[right]}-{|[left]}, violet, thick] (4,2) node[left] {$x_3$} to (5,2);
  \draw[{|[right]}-{|[left]}, violet, thick]  (2,1.5) node[left] {$x_2$} to (3,1.5);
  \draw[{|[right]}-{|[left]}, violet, thick]  (11,1.) node[left] {$x_1$}  to (12,1.0);
  \draw[{|[right]}-{|[left]}, cyan, thick]  (2,1.0) node[left] {$y_2$} to (7,1.0);
  \draw[{|[right]}-{|[left]}, cyan, thick]  (8,1.0) node[left] {$y_5$} to (10,1.0);
  \draw[{|[right]}-{|[left]}, cyan, thick]  (4,1.5) node[left] {$y_1$} to (5,1.5);
  \draw[{|[right]}-{|[left]}, cyan, thick]  (6,1.5) node[left] {$y_4$} to (9,1.5);
  \draw[{|[right]}-{|[left]}, cyan, thick]  (2,0.5) node[left] {$y_3$} to (12,0.5);
  \node at (4, -.5) {};
  \end{tikzpicture}
    \caption{\empty}
  \end{subfigure} 
  \caption{ Illustration for the proof of Theorem~\ref{thm:CharacOfCliquePath}. (a)  A connected paired threshold graph and (b) its interval mode.
 A canonical partition of $G$ is given by $I = \{x_1,x_2,x_3\}$ and $U = \{y_1, y_2, y_3, y_4, y_5\}$.  The graph has five maximal cliques, $\{ x_2,y_2,y_3 \}$, $\{ x_3,y_1,y_2,y_3 \}$, $\{ y_2,y_3,y_4 \}$, $\{ y_3,y_4,y_5 \}$, $\{ x_1,y_3 \}$, which form a clique path of $G$ in order.  }
  \label{fig:interval-model}
\end{figure}

By Theorem~\ref{thm:PTisInterval}, we know that all paired threshold graphs are interval graphs. 
This leads us to consider paired threshold graphs on the perspective of clique path.
A paired threshold graph admits a clique path with some important properties to be used in our recognition algorithm.

\begin{theorem}\label{thm:CharacOfCliquePath}
  Let $G$ be a connected paired threshold graph, and let $I \uplus U$ be a canonical partition of $G$.  
  \begin{enumerate}[(1)]
  \item   There exists a clique path $C_1, \ldots, C_\ell$ of $G$ such that
    for $1\le i \le |I|$, the only vertex in $I\cap C_i$ is a simplicial vertex of $G$, and $C_i\setminus I\subseteq C_{|I|}$;
  \item In any clique path of $G$, the maximal cliques disjoint from $I$ appear consecutively, and they appear in the same order, up to full reversal;
  \item
    If $G$ remains connected after all universal vertices removed, then for any clique path of $G$, vertices in $I$ appear in the first $|I|$ or the last $|I|$ cliques. 
  \end{enumerate}
  \end{theorem}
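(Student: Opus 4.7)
The plan is to handle the three parts in order, leveraging the paired threshold partition $I \uplus U$ from Definition~\ref{def:pt-partition} together with the uniqueness (up to reversal) of clique paths for connected unit interval graphs. Number $I = \{v_1, \ldots, v_p\}$ so that $N(v_1) \subseteq \cdots \subseteq N(v_p)$, which is possible because $N[I]$ induces a threshold graph. Let $\sigma$ be the umbrella ordering of $G[U]$ guaranteed by Definition~\ref{def:pt-partition} with first vertex $u$; then $N[u]$ is a clique, and I let $D_1, \ldots, D_k$ denote the clique path of $G[U]$ oriented so that $D_1 = N[u]$. For part~(1), I set $C_i := N[v_i]$ for $1 \le i \le p$ and $C_{p + j} := D_j$ for $1 \le j \le k$. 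Each $v_i$ is simplicial in $G$ (its neighborhood lies in the clique $N(I)$), so $N[v_i]$ is a maximal clique. Condition (iii) of Definition~\ref{def:pt-partition} gives $N(v_i) \subseteq N(I) \subseteq N[u] = D_1$, from which consecutive occurrence of every vertex in the concatenated sequence follows: each $w \in N(I)$ appears in a suffix of the $N[v_i]$'s (by the chain, starting at the smallest $s$ with $w \in N(v_s)$) together with a prefix of the $D_j$'s (since $w \in D_1$); each $w \in U \setminus N(I)$ appears only in the $D_j$'s; and each $v_i$ only in $C_i$. The required $C_i \setminus I = N(v_i) \subseteq N(v_p) \subseteq C_p$ is immediate.

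For part~(2), I would start from an arbitrary clique path $C_1, \ldots, C_\ell$ of $G$ and project to $U$: the sequence $C_1 \cap U, \ldots, C_\ell \cap U$ preserves the consecutive-occurrence property for every $u \in U$. After merging consecutive duplicates and deleting any term strictly contained in an adjacent one, the result is a clique path of $G[U]$, which by uniqueness coincides with $D_1, \ldots, D_k$ or its reversal. A maximal clique $C_j$ of $G$ disjoint from $I$ equals $C_j \cap U$ and is maximal in $G[U]$, so it survives the reduction; hence such cliques appear in $G$'s clique path in the canonical order of $G[U]$, up to reversal. For contiguity, note that each $I$-intersecting clique $N[v]$ projects to $N(v) \subseteq D_1$, so in the reduced sequence it can only be absorbed adjacent to the copy of $D_1$ (or of $D_k$ after reversal). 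Consequently, the $I$-intersecting cliques occupy a prefix and/or a suffix of the path, and the $I$-disjoint cliques must form a contiguous middle block.

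For part~(3), I argue by contradiction, supposing the clique path has $I$-intersecting cliques at both ends. Let $v_\alpha$ and $v_\beta$ be the $I$-vertices in $C_1$ and $C_\ell$ respectively, with $\alpha < \beta$ WLOG; then $N(v_\alpha) \subseteq N(v_\beta)$, so $N(v_\alpha) = N(v_\alpha) \cap N(v_\beta)$. Any $w$ in this intersection lies in both $C_1$ and $C_\ell$, hence by consecutive occurrence in every $C_j$, making $w$ universal in $G$. Therefore every neighbor of $v_\alpha$ is universal, and $v_\alpha$ becomes isolated after all universal vertices are removed: if the remaining subgraph has at least two vertices, this contradicts its assumed connectivity, while if $v_\alpha$ is the only non-universal vertex then $v_\beta$ must itself be universal, contradicting $v_\alpha v_\beta \notin E(G)$. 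I expect the main obstacle to be making the contiguity step of part~(2) fully rigorous---the informal ``absorption near $D_1$'' claim requires a careful case analysis of how $I$-intersecting cliques can interleave with the canonical $D_j$'s---whereas parts~(1) and~(3) follow cleanly from the chain structure and a short extremal argument.
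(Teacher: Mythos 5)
Your part~(3) is correct and is essentially the paper's argument, and your part~(1) is nearly right, but both (1) and (2) have concrete problems. In part~(1), the concatenation $N[v_1],\ldots,N[v_p],D_1,\ldots,D_k$ need not be a clique path of $G$, because $D_1=N_{G[U]}[u]$ may fail to be a \emph{maximal} clique of $G$: since $N(I)\subseteq D_1$ and $N(v_p)\subseteq D_1$, it can happen that $N(v_p)=D_1$, in which case $D_1\subsetneq N[v_p]$ and $D_1$ must be omitted from the sequence. This is not a contrived corner case---it occurs in the paper's own example (Figure~\ref{fig:interval-model}), where $N(x_3)=\{y_1,y_2,y_3\}$ equals the first clique of $G[U]$ and the clique path uses $\{x_3,y_1,y_2,y_3\}$ instead. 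The fix is a one-line case distinction, but as written your sequence can list a non-maximal clique.

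The serious gap is the contiguity step of part~(2), which you flag yourself but which is not merely a matter of polish. Your inference ``$N(v_i)\subseteq D_1$, hence $N[v_i]$ is absorbed adjacent to $D_1$, hence the $I$-cliques form a prefix/suffix'' does not follow: a clique $N[v_i]$ sitting \emph{between} $D_1$ and $D_2$ is also ``adjacent to $D_1$'' in the projected sequence, yet it breaks contiguity. This configuration genuinely arises for non-canonical partitions: take $G[U]$ the path $u_1u_2u_3$ and $v\in I$ with $N(v)=\{u_2\}$; then $\{u_1,u_2\},\{v,u_2\},\{u_2,u_3\}$ is a valid clique path with the $I$-clique in the middle, and all of your stated hypotheses for part~(2) hold except canonicity. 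So any correct proof \emph{must} use that the partition is canonical, and your argument never does. The paper's proof supplies exactly the two missing ingredients: it invokes Proposition~\ref{prop:CanonicalProperty} to show (in the case $N(v_p)\subsetneq D_1$) that no vertex of $D_1\setminus N(v_p)$ is simplicial, hence $D_1\cap D_2\nsubseteq N(v_p)$, which rules out an $I$-clique between $D_1$ and $D_2$; and for an $I$-clique between $D_j$ and $D_{j+1}$ deeper in the path it exhibits a claw in $G[U]$ (using $D_j\cap D_{j+1}\subseteq N(v_i)\subseteq D_1$), contradicting that $G[U]$ is a unit interval graph. Your projection-and-reduction framework could perhaps be completed along these lines, but the claw/canonicity argument is the actual content of part~(2) and is absent from your proposal.
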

\begin{proof}
  Let $p = |I|$.   It is easy to see that a unit interval graph has a canonical partition with $I \ne \emptyset$.  Therefore, we always have $p > 0$.  We find a threshold $T_\pm$ and a weight assignment $w$ as specified in Theorem \ref{thm:Characterization}; we may number the vertices of $G$ such that
  \[
    w(v_1) < \cdots < w(v_p) < T_\pm/2 < w(v_{p + 1}) < \cdots < w(v_n).
  \]
  Let $K_{p+1}, \ldots, K_q$ be the clique path of $G[U]$ implied by the umbrella ordering $v_{p + 1}, \ldots, v_n$ of $U$.
  By Definition~\ref{def:pt-partition}, $N(I) \subseteq K_{p+1}$.  Let $K_i = N[v_i]$ for $1 \leq i \leq p$.  We choose ${\cal P} = K_1, \ldots, K_p, K_{p + 2}, \ldots, K_q$ if $K_{p+1} = N(v_{p})$, or $K_1, \ldots, K_p, K_{p+1}, \ldots, K_q$ otherwise.

  We now verify that $\cal P$ is a clique path of $G$.  For $1\le i\le p$, since $v_i$ is a simplicial vertex in $G$, clique $K_i$ is a maximal clique of $G$, and the unique maximal clique containing $v_i$.   
  By Definition~\ref{def:pt-partition}, each clique $K_i$, $i \ge p + 2$, contains a vertex nonadjacent to $u$, and thus nonadjacent to $I$, hence maximal as well.
  Therefore, $\cal P$ contains all the maximal cliques of $G$ and each appears once.
  For each $v \in N(I)$, let $\ell(v)$ denote the smallest number such that $v_{\ell(v)} \in N(v)$, then $v\in K_j$ for all ${\ell(v)}\le j \le p$.  Therefore, the maximal cliques containing $v$ appear consecutively on $\cal P$.  This is trivial for other vertices: A vertex in $I$ is in a single maximal clique; and for each vertex in $U\setminus N(I)$, the condition is satisfied because $K_{p+1}, \ldots, K_q$ is a clique path of $G[U]$. Therefore, $\cal P$ is a clique path of $G$, and it satisfies (1).

  (2)  We may assume $q > p+1$; otherwise this assertion holds vacuously.  Let $\cal P'$ be an arbitrary clique path of $G$.
  Note that $G[U]$ is connected, and hence in any clique path containing the maximal cliques $K_{p+1}, \ldots, K_q$, they are in the order of $K_{p+1}, \ldots, K_q$ or its reversal.  
  We may assume without loss of generality, they appear in $\cal P'$ in the order of $K_{p+2}, \ldots, K_q$ if $K_{p+1} = N(v_{p})$, or $K_{p+1}, \ldots, K_q$ otherwise.  Suppose for contradiction that they are not consecutive in $\cal P'$; i.e., there are $1 \leq i \leq p$ and $j > p$ such that $K_i$ appears in between $K_j$ and $K_{j + 1}$.
We show that $G[U]$ has a claw (the four-vertex tree with three leaves), which is impossible because $G[U]$ is a unit interval graph \cite{wegner-67-dissertation}.  Note that $K_{j} \cap K_{j+1} \ne \emptyset$ because $G[U]$ is connected.
  
Case 1, $N(v_p) = K_p \setminus \{v_p\} \subset K_{p+1}$.
Since $(I, U)$ is a canonical partition, by Proposition \ref{prop:CanonicalProperty}, no vertex of $K_{p+1} \setminus K_p$ is simplicial in $G$.  Thus, $K_{p+1} \setminus K_p \subset K_{p+2}$, and $K_{p+1} \cap K_{p+2} \nsubseteq K_p$. 
Since $K_i\setminus \{v_i\}\subseteq K_p$ for each $1\le i\le p$, we can conclude that $j > p+1$.  Note that  $K_{j} \cap K_{j+1} \subseteq K_i \setminus \{v_i\} \subseteq K_p \setminus \{v_p\} \subset K_{p+1}$, which means $K_{j} \setminus (K_{p+1} \cup K_{j+1}) = K_{j} \setminus K_{p+1} \neq \emptyset$.  But then we can find a vertex $x_1 \in K_{j} \setminus (K_{p+1} \cup K_{j+1})$, a vertex $x_2 \in K_{j+1} \setminus K_{j}$, a vertex $x_3 \in K_{p+1} \setminus K_{j}$, and a vertex $x_0 \in K_{p+1} \cap K_{j+1}$, which induce a claw with center $x_0$.  

Case 2, $N(v_{p}) = K_{p+1}$, then $K_p \setminus \{v_p\} = N[v_{p+1}] \cap U$.
Since $K_{j} \cap K_{j+1} \subseteq K_i \setminus \{v_i\} \subseteq K_p$, none of the sets $K_{j} \setminus (K_{p} \cup K_{j+1})$ and $K_{j+1} \setminus (K_{p} \cup K_{j})$ and $K_{p} \cap K_{j} \cap K_{j+1}$ is empty.  We can find one vertex from each of them, which, together with $v_{p+1}$, induce a claw in $G[U]$.

(3)  This assertion holds vacuously when $p = 1$ or $p = q$; hence we may assume $1 < p < q$.  According to assertion~(2), it suffices to show that one of $C_1$ and $C_q$ is disjoint from $I$.
  Suppose for contradiction that there are two vertices $u\in K'_1\cap I$ and $v\in K'_q \cap I$, i.e., $K'_1 = N[u]$ and $K'_q = N[v]$.  We may assume without loss of generality $N(u) \subseteq N(v)$.  By the definition of clique paths, $N(u)$ is a subset of all maximal cliques.  But then all vertices in $N(u)$ are universal, and $v_1$ is isolated in $G - N(u)$, contradicting the assumption.  
\end{proof}

We say that a set of $k$ intervals $[\lp{v_1}, \rp{v_1}]$, $\ldots,$ $[\lp{v_k}, \rp{v_k}]$ is nested if
\[
  \lp{v_k} <\cdots < \lp{v_1} \le \rp{v_{1}} < \cdots < \rp{v_{k}}.
\]
\begin{figure}[h]
  \centering\small
  \begin{tikzpicture}[scale=.6]
  \foreach \ly in {1,2,...,4}
  {
    \draw[{|[right]}-{|[left]}, cyan, thin] (-\ly+5,\ly/3) node[black, left] {$v_\ly$} to (\ly + 8.5,\ly/3);
  }
  \foreach \ly in {5,6,...,9}
  {
    \draw[{|[right]}-{|[left]}, cyan, thin] (\ly,\ly/3) to (\ly + 8.5,\ly/3);
  }
  \foreach \ly in {10,11,...,16}
  {
    \draw[{|[right]}-{|[left]}, cyan, thin] (\ly,\ly/3-3) to (\ly + 8.5,\ly/3-3);
  }
  \foreach \lx in {1,2,...,4}
  {
    \draw[{|[right]}-{|[left]}, violet, thin] (-\lx+5,5/3) node[black,above] {$u_\lx$} to (-\lx + 5.5,5/3);
  }

  \draw[{|[right]}-{|[left]}, cyan, thick] (10, .333) to (18.5, .3333);  
  \node[above] at (5,5/3) {$v_5$};
  \node[left] at (9,3) {$v_9$};
  \node[right] at (18.5,1/3) {$v_{10}$};
  \node[right] at (24.5,7/3) {$v_{16}$};

%  \draw[dashed,gray,thin] (4.5, 0) -- (4.5, 3.5);
  \draw[dashed,thin] (10.5, 3.3) -- (10.5, 0) node[below left, xshift=1mm] {\rp{v_2}};
  \draw[dashed,thin] (11.5, 3.3) -- (11.5, 0) node[below right, xshift=-1mm] {\rp{v_3}};
  \draw[dashed] (0, 0) -- (25, 0);
  \end{tikzpicture}
  \caption{Construction used in the proof of Lemma~\ref{thm:IntervalsOfPT}, with $k=4$.  In any interval model for this graph, we have to use intervals of different lengths for $v_1$, $v_2$, $v_3$, and $v_4$.}
  \label{fig:DifferenceIntervals}
\end{figure}
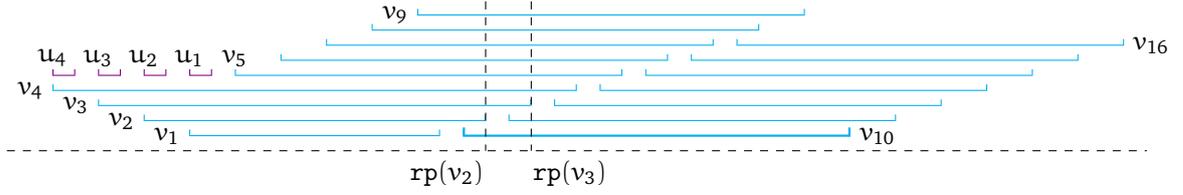

\begin{lemma}\label{thm:IntervalsOfPT}
  For any positive integer $k$, there exists a paired threshold graph $G$ such that there are $k$ nested intervals in  any interval model of $G$.
\end{lemma}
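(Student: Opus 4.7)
My plan is to construct, for each $k\ge 1$, an explicit paired threshold graph $G_k$ whose adjacency structure forces $k$ intervals to be nested in every interval representation. Mimicking Figure~\ref{fig:DifferenceIntervals}, I will take $G_k$ on vertex set $\{u_1,\dots,u_k\}\cup\{v_1,\dots,v_n\}$ for some $n\ge 2k+1$, with $\{u_1,\dots,u_k\}$ an independent set, $v_iv_j\in E(G_k)$ iff $|i-j|\le k$, and $u_iv_j\in E(G_k)$ iff $i\le j\le k$. Each $u_i$ is then simplicial with $N(u_i)=\{v_i,\dots,v_k\}$, and the ``tail'' $v_{k+1},\dots,v_n$ prevents $G_k$ from collapsing to a threshold graph.

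The first step is to verify that $G_k$ is paired threshold by checking that $I=\{u_1,\dots,u_k\}$ and $U=\{v_1,\dots,v_n\}$ form a paired threshold partition in the sense of Definition~\ref{def:pt-partition}, then invoking Theorem~\ref{thm:Characterization}. Concretely, $G[U]$ is a unit interval graph with umbrella ordering $v_1,\dots,v_n$; the subgraph $N[I]$ is a threshold graph because $\{v_1,\dots,v_k\}$ is a clique and $N(u_1)\supsetneq N(u_2)\supsetneq\cdots\supsetneq N(u_k)$; each $N(u_i)$ is a consecutive block of that umbrella ordering; and $N(I)=\{v_1,\dots,v_k\}\subseteq N[v_1]=\{v_1,\dots,v_{k+1}\}$. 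This is a routine direct verification.

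The substantial part of the proof is to show that every interval model of $G_k$ contains $k$ nested intervals. I would first enumerate the maximal cliques: $C_i=\{u_i,v_i,\dots,v_k\}$ for $1\le i\le k$, and $K_j=\{v_j,v_{j+1},\dots,v_{j+k}\}$ for $1\le j\le n-k$. Since $G_k$ has no universal vertex, Theorem~\ref{thm:CharacOfCliquePath}(2)--(3) implies that in any clique path of $G_k$: (a) the $K_j$'s appear consecutively in the order $K_1,\dots,K_{n-k}$ up to full reversal; and (b) the $C_i$'s appear as a block at one end. Applying the consecutivity requirement to each $v_i$ with $i\le k$, which lies in both the $C$-block (in $C_1,\dots,C_i$) and the $K$-block (in $K_1,\dots,K_i$), pins down the unique clique-path order up to full reversal as
\[
  C_k,\,C_{k-1},\,\dots,\,C_1,\,K_1,\,K_2,\,\dots,\,K_{n-k}.
\]
Hence in every interval model the interval of $v_i$ covers precisely the clique positions $k-i+1,\dots,k+i$, so for $i<j\le k$ the interval of $v_j$ strictly extends that of $v_i$ on both sides. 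This produces the nested chain $v_1\subsetneq v_2\subsetneq\cdots\subsetneq v_k$ of $k$ intervals.

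The main technical obstacle will be the clique-path rigidity argument of the third step: establishing that the $C_i$'s must appear in the specific order $C_k,C_{k-1},\dots,C_1$ adjacent to $K_1$, rather than some other admissible permutation. This requires carefully tracking the occurrence intervals of each $v_i$ ($i\le k$) across both the $C$-part and the $K$-part of the clique path and using the consecutivity requirement to rule out any alternative order; once this rigidity is in place the nested containments follow immediately from the position assignment.
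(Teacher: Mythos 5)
Your proof is correct in outline but takes a genuinely different route from the paper's. The paper constructs a graph with a longer unit-interval backbone ($u_1,\dots,u_{4k}$ with $u_iu_j\in E$ iff $|i-j|\le 2k$, plus simplicial vertices $v_1,\dots,v_k$ attached to the suffixes $\{u_i,\dots,u_k\}$) and then forces the nesting of the intervals of $u_1,\dots,u_k$ by a purely local chain of endpoint inequalities, using only adjacency and non-adjacency of carefully chosen pairs; it is self-contained and uses no structural machinery. You instead argue via clique-path rigidity: identify the $n$ maximal cliques, invoke Theorem~\ref{thm:CharacOfCliquePath}(2)--(3) to force the clique path to be $C_k,\dots,C_1,K_1,\dots,K_{n-k}$ up to reversal, and read off the nesting from the strictly nested occurrence sets of $v_1,\dots,v_k$. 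Your order-pinning step (that $v_i$ lies in exactly $C_1,\dots,C_i,K_1,\dots,K_i$, so consecutivity forces $C_i$ into position $k-i+1$) does work, and the approach arguably makes the phenomenon more transparent. However, two obligations remain. First, Theorem~\ref{thm:CharacOfCliquePath} is stated, and its part (2) is proved, only for a \emph{canonical} partition (its proof invokes Proposition~\ref{prop:CanonicalProperty}); you never verify that $\{u_1,\dots,u_k\}\uplus\{v_1,\dots,v_n\}$ is canonical. It is, for $k\ge 2$ and $n\ge 2k+1$ (for every $j>k$ one finds a $P_4$ or $2K_2$ inside $N[I\cup\{v_j\}]$, so no $v_j$ can be moved into $I$), but this check must be written out, or the theorem does not apply as cited. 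Second, the final passage from ``$v_j$ occupies a strictly larger window of clique positions than $v_i$'' to the endpoint inequalities $\lp{v_j}<\lp{v_i}\le\rp{v_i}<\rp{v_j}$ needs the standard observation that in any model each maximal clique has a Helly point, and a vertex's interval contains the Helly points of exactly the maximal cliques containing it (containing any other such point would contradict maximality); with that sentence added, your argument closes.
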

\begin{proof}
  We construct a graph $G$ as follows.  The vertex set consists $5 k$ vertices, denoted by $\{ v_1, \ldots, v_k, u_1, \ldots, u_{4k} \}$.  For each pair of $i, j$ with $1 \le i \le j \leq k$, we add edge $v_iu_{j}$.  For each $1 \leq i \leq 2k$, we add edges $u_i u_{i + 1}$, $\ldots$, $u_i u_{i + 2 k}$, i.e., connecting $u_i$ and the next $2 k$ vertices.  Finally, add all possible edges to make $u_{2 k+1}$, $\ldots$, $u_{4 k}$ into a clique.  In summary, the edge set of $G$ is
  \[
    E(G) = \{ u_iu_j \mid |j - i| \le 2k \}
    \cup  \{ v_iu_j \mid 1 \le i \le j \leq k \}.
  \]
  See Figure \ref{fig:DifferenceIntervals} for an example of the construction.
  It is easy to verify that $v_1, \ldots, v_k, u_1, \ldots, u_{4k}$ is a broom ordering, and hence $G$ is a paired threshold graph by Theorem~\ref{thm:Characterization}.
  
  We now argue that for any pair of $i, j$ with $1 \leq i < j \leq k$, the interval for $u_i$ has to be properly contained in the interval for $u_j$ in any interval model of $G$.  Note that $u_i$ and $u_{j+2k}$ are nonadjacent.  We may assume without loss of generality  $\rp{u_{i}} < \lp{u_{j+2k}}$.
Since both $u_{i}$ and $u_{j+2k}$ are adjacent to $u_j$ but they are not adjacent to each other,
 $$\lp{u_j}\le \rp{u_{i}} < \lp{u_{j+2k}} \leq \rp{u_{j}}.$$
 Vertex $u_{j+2k+1}$ is adjacent to $u_{j+2k}$ but not $u_j$, hence
 $$\rp{u_{j}} < \lp{u_{j+2k+1}} \le \rp{u_{j+2k}}.$$
 Since $u_{i+2k}$ is adjacent to both $u_{i}$ and $u_{j+2k+1}$, we have
 $$\lp{u_{i+2k}} \le \rp{u_{i}} < \lp{u_{j+2k+1}} \le \rp{u_{i+2k}}.$$
 From them we can conclude
 $$\lp{u_{i+2k}} \le \rp{u_{i}}, \rp{u_{j}} < \rp{u_{i+2k}},$$
 Finally, since $v_j$ is adjacent to $u_j$, but not to $u_i$ or $u_{i+2k}$,
 $$\lp{u_j} \le \rp{v_j} < \lp{u_i}.$$
 Putting them together, we can conclude that $\lp{u_j} < \lp{u_i} < \rp{u_{i}} < \rp{u_{j}}$, which completes the proof.
\end{proof}

\section{Recognition}\label{section:Recog}

It is well known that interval graphs, unit interval graphs, threshold graphs can be recognized in linear time \cite{hsu-99-recognizing-interval-graphs, corneil-04-recognize-uig}.  For (unit) interval graphs, the recognition algorithms return an interval model, from which we can retrieve a clique path or an umbrella ordering in the same time.
We say that two vertices $u,v$ are \emph{true twins} if $N[u] = N[v]$.  A set of vertices is a \emph{true-twin class} if it is a maximal set of vertices that are pairwise true twins.  A graph has a unique partition into true-twin classes.
The following proposition is from Deng et al.~\cite{deng-96-proper-interval-and-cag}, and it is the core idea of Corneil~\cite{corneil-04-recognize-uig}.

\begin{proposition}\label{prop:UIGSubOrderGeneralize}
  Let $G$ be a unit interval graph and let $\sigma$ be an umbrella ordering of $G$.
  \begin{enumerate}[(1)]
  \item For each set of true-twin class $T$ of $G$, vertices in $T$ appear consecutively in $\sigma$, and this subsequence can be replaced by an arbitrary ordering of $T$.
  \item If $G$ does not contain true twins, then it has a unique umbrella ordering, up to full reversal.
  \end{enumerate}
\end{proposition}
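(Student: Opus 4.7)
For part (1), my plan is in two steps. First, I would show that any true-twin class $T$ appears consecutively in $\sigma$. Take $u, v \in T$ with $u <_\sigma v$ and suppose toward contradiction that some $w$ with $u <_\sigma w <_\sigma v$ is not in $T$. Since true twins are adjacent, $uv \in E(G)$, so the umbrella property applied to $u, w, v$ yields $uw, wv \in E(G)$. I would then verify $N[w] = N[u]$ by a short case analysis on where each candidate differentiating vertex $x$ sits in $\sigma$ (before $u$, between $u$ and $w$, between $w$ and $v$, or after $v$), invoking the umbrella property on the appropriate triple and using $N[u] = N[v]$ to shuttle edges between $u$ and $v$ as needed. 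This forces $w \in T$, a contradiction.

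Second, I would show that arbitrarily permuting the (now consecutive) $T$-block preserves the umbrella property. Any pair containing a vertex outside $T$ keeps its relative order, so for a triple $x <_{\sigma'} y <_{\sigma'} z$ with $xz \in E(G)$, I split on how many of $\{x, y, z\}$ lie in $T$. A triple wholly inside $T$ is trivial since $T$ is a clique; a triple disjoint from $T$ invokes the umbrella property of $\sigma$ directly; in the mixed cases I substitute an arbitrary $T$-representative for each $T$-member and reduce to the umbrella property of $\sigma$, using that every vertex of $T$ has the same closed neighborhood.

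For part (2), the absence of true twins turns every twin class into a singleton, so part (1) yields no permutation freedom. To close the argument I would reduce to the connected case and appeal to the uniqueness up to reversal of the clique path of a connected unit interval graph, due to Deng et al. and used by Corneil's recognition algorithm. Given two umbrella orderings $\sigma, \sigma'$ of a connected $G$, each reconstructs the same clique path up to reversal; after possibly reversing one of them, the first vertex must be the unique vertex of the first maximal clique $C_1$ that does not lie in $C_2$ (unique, since two such vertices would be true twins), and an induction on $|V(G)|$ forces the orderings to agree.

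The main obstacle I anticipate is the induction in part (2): deleting the first vertex of $\sigma$ can create new true twins in $G - v_1$, so the no-twin hypothesis is not automatically inherited. I expect to handle this by observing that the clique path of $G - v_1$ is a truncation of that of $G$, and that any newly twinned pair is forced into an identical pair of positions in both $\sigma$ and $\sigma'$ by this shared clique path, so the induction still pushes through after a small bookkeeping argument.
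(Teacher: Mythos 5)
The paper does not actually prove this proposition; it is imported from Deng, Hell and Huang \cite{deng-96-proper-interval-and-cag} and credited as the core idea of Corneil's algorithm, so there is no in-paper argument to compare yours against. Judged on its own, your part (1) is correct and complete in all essentials: for an interloper $w$ strictly between two twins $u <_\sigma v$, each of the four possible positions of a candidate distinguishing vertex $x$ admits a spanning edge among $\{x,u,v,w\}$ (using $N[u]=N[v]$) to feed to the umbrella condition, which does force $N[w]=N[u]$; and the verification that permuting a consecutive twin block preserves the umbrella property is routine exactly as you describe, since a triple meeting $T$ in two nonconsecutive positions cannot occur once the block is consecutive.

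Part (2) is where the real work lies, and your induction as set up does not close. The difficulty you flag --- that $G-v_1$ may acquire true twins --- is not bookkeeping but the crux: by part (1) applied to $G-v_1$, the new twin classes may be permuted freely, so knowing that the restrictions of $\sigma$ and $\sigma'$ to $G-v_1$ are umbrella orderings of $G-v_1$ no longer forces them to agree (already $G=P_3$ exhibits this). The clean repair is to induct on the \emph{stronger} statement combining (1) and (2): a connected unit interval graph has a unique umbrella ordering up to full reversal and permutation inside each true-twin class. With that hypothesis it remains only to pin down the order inside each twin class of $G-v_1$ that is not a twin class of $G$, and this is determined by adjacency to $v_1$: since $N[v_1]$ must form a prefix of any umbrella ordering beginning with $v_1$, the neighbor of $v_1$ in such a pair precedes the non-neighbor in both $\sigma$ and $\sigma'$. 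This is essentially the mechanism you gesture at via ``the shared clique path,'' but it must be built into the induction hypothesis rather than patched on afterwards. Two smaller points: your identification of the forced first vertex as the unique element of $C_1\setminus C_2$ is correct (two such vertices would both have closed neighborhood $C_1$), though you should dispose separately of the trivial case where $G$ has a single maximal clique; and your ``reduction to the connected case'' quietly supplies a hypothesis the statement omits --- for a disconnected twin-free graph such as two disjoint copies of $P_3$, conclusion (2) is false as literally written, so connectivity is genuinely needed and worth stating.
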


We may assume that the input graph $G$ is a connected interval graph.  If it is not an interval graph, then we may return ``no'' by Theorem \ref{thm:PTisInterval}.  If it is not connected, then we may remove all the components that are unit interval graphs, and return ``no'' if more than one component is left by Lemma \ref{lem:disconnected}.

The way we handle a connected interval graph $G$ is to use a clique path $\cal P$ of $G$.  We try to find a canonical partition from $\cal P$.  According to Theorem~\ref{thm:CharacOfCliquePath}(2), if $G$ is a paired threshold graph, then we can find the partition $I \uplus U$ with $I$ from the two ends of the clique path.  This can be simplified by Theorem~\ref{thm:CharacOfCliquePath}(3): If $G$ remains connected after all universal vertices removed, then it suffices to search only one end of $\cal P$.
We hence proceed dependent upon whether $G$ contains a universal vertex.

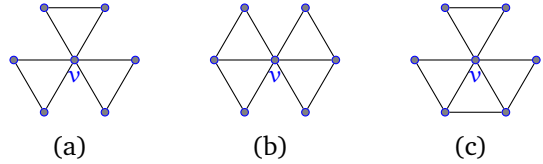
\begin{figure}[h!]
  \centering      
  \footnotesize
  \begin{subfigure}[b]{0.15\linewidth}
    \centering
    \begin{tikzpicture}[, scale=.8]
      \begin{scope}[every node/.style={vertex}]
      \node (a) at (60*4:1) {};
      \node (b) at (60*3:1) {};
      \node (c) at (60*2:1) {};
      \node (d) at (60*1:1) {};
      \node (e) at (60*0:1) {};
      \node (f) at (60*-1:1) {};
      \node["$v$" below] (v) at (0,0) {};
      \end{scope}

      \begin{scope}[every path/.style={}]
        \draw  (a)--(b)--(v)--(a) (c)--(d)--(v)--(c) (e)--(f)--(v)--(e);
      \end{scope}
    \end{tikzpicture}
    \caption{\empty}
    \label{fig:v+3K2}
  \end{subfigure} 
  \begin{subfigure}[b]{0.15\linewidth}
    \centering
    \begin{tikzpicture}[every node/.style={vertex}, scale=.8]
      \node (a) at (60*4:1) {};
      \node (b) at (60*3:1) {};
      \node (c) at (60*2:1) {};
      \node (d) at (60*1:1) {};
      \node (e) at (60*0:1) {};
      \node (f) at (60*-1:1) {};
      \node[label=below:$v$] (v) at (0,0) {};

      \begin{scope}[every path/.style={}]
        \draw  (a)--(b)--(c) (d)--(e)--(f) 
        (v)--(a) (v)--(b) (v)--(c) (v)--(d) (v)--(e) (v)--(f);
      \end{scope}
    \end{tikzpicture}
    \caption{\empty}
    \label{fig:v+2P3}
  \end{subfigure}
  \begin{subfigure}[b]{0.15\linewidth}
    \centering
    \begin{tikzpicture}[every node/.style={vertex}, scale=.8]
      \node (a) at (60*4:1) {};
      \node (b) at (60*3:1) {};
      \node (c) at (60*2:1) {};
      \node (d) at (60*1:1) {};
      \node (e) at (60*0:1) {};
      \node (f) at (60*-1:1) {};
      \node[label=below:$v$] (v) at (0,0) {};

      \begin{scope}[every path/.style={}]
        \draw  (b)--(a)--(f)--(e) (c)--(d) 
        (v)--(a) (v)--(b) (v)--(c) (v)--(d) (v)--(e) (v)--(f); 
      \end{scope}
    \end{tikzpicture}
    \caption{\empty}
    \label{fig:v+P2P4}
  \end{subfigure}
  \caption{Minimal non-paired threshold graphs each having a universal vertex.  Note that the removal of the universal vertex leaves a unit interval graph.}
  \label{fig:forbidden-with-universal}
\end{figure}

\begin{proposition}[\cite{chvatal-77-inequalities-in-ip}]
 \label{lem:deletion-sequence}
 A nonempty threshold graph contains either an isolated vertex or a universal vertex.
\end{proposition}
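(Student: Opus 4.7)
The cleanest route is to argue directly from the weight definition. Fix a weight assignment $w : V(G) \to \mathbb{R}^+$ and threshold $T$ realizing $G$ as a threshold graph, and let $v$ be a vertex of minimum weight and $u$ a vertex of maximum weight (breaking ties arbitrarily). I will show that if $v$ is not isolated, then $u$ is universal, which immediately gives the dichotomy.

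Suppose $v$ has at least one neighbor $y$, so that $w(v) + w(y) \ge T$ by the definition of a threshold graph. Then for any vertex $x \in V(G) \setminus \{u\}$ we have $w(x) \ge w(v)$ and $w(u) \ge w(y)$, so
\[
w(u) + w(x) \;\ge\; w(y) + w(v) \;\ge\; T,
\]
which means $ux \in E(G)$. Hence $u$ is adjacent to every other vertex, i.e., $u$ is universal. Conversely, if $v$ has no neighbor then $v$ is itself isolated, and we are done. Note that the argument is vacuous only if $V(G)$ has a single vertex, but any single vertex is simultaneously isolated and universal, so the statement still holds.

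The main (and really only) obstacle is choosing between two short proofs: the weight-based one above, or the one that uses the split-partition characterization $V(G)=K\uplus I$ with the neighborhoods of vertices in $I$ forming a chain under containment. In the latter, if some $v\in I$ has $N(v)\cap K=\emptyset$, then $v$ is isolated; otherwise the chain property gives a vertex of $K$ lying in $\bigcap_{v\in I}N(v)$, which is then adjacent to all of $I$ and, being in the clique $K$, to all of $K\setminus\{u\}$, hence universal (if $I=\emptyset$, any vertex is universal). I will go with the weight-based argument since it is a three-line calculation and uses only the primary definition already recalled in the Preliminaries.
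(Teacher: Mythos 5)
Your weight-based argument is correct: taking $v$ of minimum and $u$ of maximum weight, if $v$ has a neighbor $y$ then $w(u)+w(x)\ge w(y)+w(v)\ge T$ for every $x\ne u$, so $u$ is universal; otherwise $v$ is isolated, and the one-vertex case is handled. Note that the paper does not prove this proposition at all---it is quoted as a known fact with a citation to Chv\'atal and Hammer---so there is no in-paper proof to compare against; your three-line calculation from the primary weight definition is a perfectly good self-contained justification, and your alternative sketch via the split partition with nested neighborhoods would also work.
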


In other words, a threshold graph can always be made empty by exhaustively removing isolated vertices and universal vertices.  
It is easy to see that paired threshold graphs are closed under adding isolated vertices, but not necessarily universal vertices.  The following result is an extension of Proposition~\ref{lem:deletion-sequence} to paired threshold graphs.
Recall that minimal forbidden induced subgraphs of threshold graphs are $2K_2$, $P_4$, and $C_4$ (Figure~\ref{fig:threshold-free}).

\begin{lemma} \label{lemma:ComponentsOfPT}
  Let $G$ be a connected graph, and let $G'$ be obtained from $G$ by exhaustively removing universal vertices and isolated vertices.  If $G' \ne G$, then $G$ is a paired threshold graph if and only if one of the following conditions holds:
   \begin{enumerate}[(1)]
   \item $G'$ is an empty graph;
   \item $G'$ has two components, one being a complete graph, and the other a threshold graph;
   \item $G'$ is connected, and it remains a paired threshold graph after adding a universal vertex.
   \end{enumerate}
\end{lemma}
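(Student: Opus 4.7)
The plan is to prove both directions of the biconditional, treating each case separately.

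For the \emph{if} direction: In case~(1), $G' = \emptyset$ means that $G$ is constructed from the empty graph by sequentially adding universal and isolated vertices, which by Proposition~\ref{lem:deletion-sequence} yields exactly the threshold graphs; since threshold graphs are paired threshold, $G$ is paired threshold. In case~(2), I will construct a paired threshold partition of $G$ explicitly. Analyzing the removal process, $V(G) \setminus V(G')$ splits into a clique $U^*$ (vertices adjacent to all of $V(G')$) and an independent set $I^*$ (vertices non-adjacent to all of $V(G')$), and within $U^* \cup I^*$ each $I^*$-vertex has as neighbors a prefix of $U^*$ in the removal order. Letting $V(T) = A \cup B$ be the threshold decomposition of $T$, I take $I = A \cup I^*$ and $U = B \cup U^* \cup K_s$. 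Then $G[U]$ consists of two cliques $B \cup U^*$ and $U^* \cup K_s$ joined on the clique $U^*$, which is a unit interval graph, and ordering $B$ so that each $N_T(a_i)$ is a suffix and $U^*$ in removal order gives the umbrella ordering $b_1, \dots, b_{|B|}, u_1, \dots, u_{|U^*|}, k_1, \dots, k_s$, in which each $I$-vertex has a consecutive neighborhood and condition~(iii) of Definition~\ref{def:pt-partition} holds at $b_1$. In case~(3), I will start from the paired threshold representation of $G' + u$ and extend it to $G$: give each additional vertex of $U^* \setminus \{u\}$ the weight of $u$ (preserving its universal-like adjacencies), and place each vertex of $I^*$ at a carefully chosen weight that renders it adjacent only to the appropriate prefix of $U^*$.

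For the \emph{only if} direction: $G'$ is paired threshold as an induced subgraph of $G$, and contains no isolated or universal vertex by construction. Since $G$ is connected and $G \neq G'$, some universal vertex $u$ exists in $G$, and $G' + u$ is paired threshold with $u$ universal. If $G'$ is empty, we are in case~(1); if $G'$ is connected and nonempty, case~(3) is witnessed by $u$. When $G'$ has at least two components, I apply forbidden-subgraph arguments. Each component of $G'$ contains an edge (since $G'$ has no isolated vertex), so three or more components would produce an induced $u + 3K_2$ in $G' + u$, contradicting Figure~\ref{fig:v+3K2}. Thus $G'$ has exactly two components $C_1, C_2$. If both were non-cliques, each would contain an induced $P_3$, producing an induced $u + 2P_3$, contradicting Figure~\ref{fig:v+2P3}. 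Hence one, say $C_1$, is a clique. Then $C_2$ must avoid $2K_2$, $P_4$, and $C_4$ as induced subgraphs: an induced $2K_2$ in $C_2$ combined with an edge of $C_1$ reproduces $u + 3K_2$; an induced $P_4$ gives $u + (P_2 \cup P_4)$, contradicting Figure~\ref{fig:v+P2P4}; and an induced $C_4$ would contradict Theorem~\ref{thm:PTisInterval} since $C_4$ is not an interval graph and hence not paired threshold. Therefore $C_2$ is $\{2K_2, P_4, C_4\}$-free, i.e., threshold, placing us in case~(2).

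The main obstacle I anticipate is the case~(2) construction in the \emph{if} direction: verifying the umbrella-ordering property and the consecutive-neighborhood property for every vertex of $I$ simultaneously. The delicate point is that $A$-vertices see suffixes of $B$ together with all of $U^*$, while $I^*$-vertices see only prefixes of $U^*$, and aligning these at the $B$-$U^*$ boundary in the ordering is what makes the conditions compatible. In case~(3) the weight extension is similarly delicate when the removal sequence interleaves $U^*$- and $I^*$-removals, but this can be handled more cleanly by reordering the build-up so that all $U^*$-additions are performed first (extending the assignment of $G'+u$ by replicating $u$'s weight) and then all $I^*$-additions using Lemma~\ref{lem:disconnected}.
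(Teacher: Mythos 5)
Your \emph{only if} direction and cases (1)--(2) of the \emph{if} direction are sound and essentially coincide with the paper's argument (the paper builds the partition for case (2) from a maximum independent set of $G-H_1$ rather than from the threshold decomposition of the second component, but the two constructions are interchangeable). The genuine gap is in case (3). You propose to extend a weight assignment of $G'+u$ by giving every vertex of $U^*$ the weight of $u$ and then choosing weights for the $I^*$-vertices so that each becomes adjacent to ``the appropriate prefix of $U^*$.'' This cannot work: if all of $U^*$ carries the same weight, then for any weight given to an $I^*$-vertex the two defining inequalities hold for either all of $U^*$ or none of it, so the vertex is adjacent to all of $U^*$ or to none. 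But an $I^*$-vertex can genuinely need a proper nonempty prefix. For instance, start from $G'=P_4$, add a universal vertex $u_2$, then an isolated vertex $x$, then a universal vertex $u_1$: the resulting $G$ is connected, reduces to $G'$ with $U^*=\{u_1,u_2\}$ and $I^*=\{x\}$, it is a paired threshold graph (the gem $G'+u$ is a claw-free interval graph, hence unit interval), yet $N(x)=\{u_1\}\ne\emptyset,U^*$. So in any valid assignment for $G$ the vertices of $U^*$ must receive distinct weights, and your construction as stated produces no valid assignment.

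Your fallback---``reordering the build-up so that all $U^*$-additions are performed first and then all $I^*$-additions using Lemma~\ref{lem:disconnected}''---does not repair this. Reordering the additions changes the graph (an $I^*$-vertex added after all of $U^*$ would end up isolated in the result, which it is not in $G$), and Lemma~\ref{lem:disconnected} only licenses attaching whole components that are unit interval graphs, whereas the $I^*$-vertices lie in the same component as $U^*$. The paper's fix is to abandon direct weights at exactly this point: it first shows $G-I^*$ is a paired threshold graph by the replicated-weight argument (valid there because $U^*$ is a true-twin class of $G-I^*$), and then adds $I^*$ through the partition characterization of Theorem~\ref{thm:Characterization}, using Proposition~\ref{prop:UIGSubOrderGeneralize} to permute $U^*$ inside its consecutive block of the umbrella ordering so that each $I^*$-neighborhood (these form a chain of prefixes) is consecutive, yielding a partition $(I\cup I^*)\uplus U$ satisfying Definition~\ref{def:pt-partition}. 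You should either adopt that route or, if you insist on weights, perturb the $U^*$-weights to distinct values and verify that no adjacency to $V(G')$ is destroyed---which is considerably more delicate than your sketch suggests.
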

\begin{proof}
  Let $S$ be the set of all universal vertices deleted in the process.
  Note that $S \neq \emptyset$ because $G$ is connected and $G \neq G'$.

  Suppose first that $G$ is a paired threshold graph.  Note that each component of $G'$ contains at least two vertices: All isolated vertices have been deleted.  If $G'$ has more than two components, then $G$ contains a subgraph isomorphic to Figure~\ref{fig:v+3K2}, which is impossible.  On the other hand, it is straightforward when the number of components in $G'$ is 0 or 1.  Now that $G'$ has two components, denoted by $H_1$ and $H_2$.
  If neither of $H_1$ and $H_2$ is a clique, then we can find from each of them an induced path on three vertices, and thus $G$ contains a subgraph isomorphic to Figure~\ref{fig:v+2P3}.  
  We may assume without loss of generality that $H_1$ is a clique.  If $H_2$ contains a $2K_2$ or $P_4$, then we can find in $G$ a subgraph isomorphic to Figure~\ref{fig:v+3K2} or Figure~\ref{fig:v+P2P4} respectively.  Since $G$ is an interval graph (Theorem~\ref{thm:PTisInterval}), $H_2$ is $\{P_4, C_4, 2K_2\}$-free, hence a threshold graph.

  We now prove the if direction. 

  (1) If $G'$ is empty, then by Proposition~\ref{lem:deletion-sequence}, $G$ is a threshold graph.

  (2) Let $H_1, H_2$ be vertex sets of the two components of $G'$, such that $G[H_1]$ is a complete graph and $G[H_2]$ a threshold graph. 
  By Proposition~\ref{lem:deletion-sequence}, $G - H_1$, which contains $H_2$ and all the deleted vertices, is a threshold graph.  Let $I$ be a maximum independent set of $G - H_1$.  Then $G - I$ is a unit interval graph consisting of two maximal cliques, $V(G) \setminus (H_1\cup I)$ and $H_1 \cup S$. 
It is easy to verify that the neighborhoods of vertices in $I$ form a total order under the containment relation.
   We can construct an umbrella ordering of $G - I$, where $H_2 \setminus I$ come first, followed by $S$, and $H_1$ are in the end.
   Since $H_2 \setminus I$ and $S$ are both true-twin classes of $G-I$, by Proposition~\ref{prop:UIGSubOrderGeneralize}, we can adjust the umbrella ordering such that $N(v)$ appears consecutively for each $v \in I$.  Therefore, $I \uplus (V(G) \setminus I) $ is a paired threshold partition of $G$, and hence $G$ is a paired threshold graph by Theorem~\ref{thm:Characterization}. 
  
  (3) Let $D$ denote the set of all isolated vertices deleted in the process.  We argue that $G - D$ is a paired threshold graph.  In a weight assignment to the graph obtained from $G'$ by adding a universal vertex, which is a paired threshold graph by assumption, the universal vertex has to receive a weight $\ge T_\pm/2$.  Since $S$ are true twins in $G - D$; we can use the same weight to all of them.  
Now that $G - D$ is a paired threshold graph, let $I \uplus U$ be a paired threshold partition of $G - D$, with an umbrella ordering $\sigma$ of $G[U]$ specified in Definition~\ref{def:pt-partition}.  Note that $S \subseteq U$.  As true twins in $G-D$, vertices in $S$ appear consecutively in $\sigma$ by Proposition~\ref{prop:UIGSubOrderGeneralize}.  Moreover, in the graph $G$, the neighborhoods of vertices in $I \cup D$ form a total order under the containment relation.
  Therefore, we can adjust the ordering of $S$ in $\sigma$ such that $N(v)$ appears consecutively for each $v \in D$.
  Since $S = N(D) \subseteq N(x)$ for all $x \in I$, we can verify that parition $(I \cup D) \uplus U$ of $V(G)$ and the ordering $\sigma$ after adjustment satisfy Definition~\ref{def:pt-partition}.  Thus, $G$ is a paired threshold graph by Theorem~\ref{thm:Characterization}.
\end{proof}

We then remove exhaustively universal vertices and isolated vertices from $G$ and study the resulting graph $G'$.  By Lemma~\ref{lemma:ComponentsOfPT}, we return ``yes'' if $G'$ is empty, and return ``no'' if $G'$ contains more than two components.  If $G'$ has precisely two components, then we return whether they satisfy Lemma~\ref{lemma:ComponentsOfPT}(2), i.e.,  one component being a complete graph and the other a threshold graph.  In the rest $G'$ is connected; if $G' \ne G$, then we add a universal vertex to $G'$.  Note that it is an induced subgraph of $G$.
We build a clique path $K_1, \ldots, K_p$ for this subgraph.
Let $I_L$ be greedily obtained as follows.  From $i = 1, \ldots, p$, we pick a simplicial vertex from each $K_i$, as long as $N[I_L]$ still induces a threshold graph.  Let $I_R$ constructed in a similar way, but from $K_p$ to $K_1$.
By Proposition~\ref{prop:CanonicalProperty} and Theorem~\ref{thm:CharacOfCliquePath}, $G$ is a paired threshold graph if and only if one of $I_L$ and $I_R$ satisfies Definition~\ref{def:pt-partition}.  It remains to verify whether one of the partitions is correct.  As usual, $n$ and $m$ denote the number of vertices and the number of edges, respectively.

\begin{figure}[h!]
  \tikz\path (0,0) node[draw=gray!50, text width=.9\textwidth, rectangle, rounded corners, inner xsep=20pt, inner ysep=10pt]{
    \begin{minipage}[t!]{\textwidth} \small
      {\sc Input}: a connected interval graph $G$, and a partition $I \uplus U$ satisfying Definition~\ref{def:pt-partition}(1).
  \\
  {\sc Output}:  whether the partition satisfies Definition~\ref{def:pt-partition}(2--3) or not.
  \begin{tabbing}
    AAA\=AAa\=AAA\=AAA\=MMMMMMMMMMMMMAAAAAAAAAAAAAAAAAAAAAAAAA\=A \kill
    1.\> compute an umbrella ordering $\sigma$ of $G - I$;
    \\
    2.\> let ${\cal T} = T_1, \ldots T_t$ be the true-twin classes of $G - I$ in the order of $\sigma$;
    \\
    3. \> {\bf If} $N(I) \nsubseteq N[T_1]$ and $N(I) \nsubseteq N[T_t]$ {\bf then return} ``no'';
    \\
    4.\> {\bf If} $N(I) \nsubseteq N[T_1]$ {\bf then} reverse the sequence $\cal T$;
    \\
    5.\> {\bf for} $i\leftarrow 1,\ldots, |I|$ {\bf do}
    \\
    5.1.\>\> find the first set $T_\ell$ in $\cal T$ intersecting $N(u_i)$;
    \\
    5.2.\>\> find the last set $T_r$ in $\cal T$ intersecting $N(u_i)$;
    \\
    5.3.\>\> {\bf if} $T\not\subseteq N(u_i)$ for any set $T$ between $T_\ell$ and $T_r$ in $\cal T$ {\bf then return} ``no'';
    \\
    5.4.\>\> replace $T_{\ell}$ by $T_{\ell}\setminus N(u)$ and $T_{\ell} \cap  N(u)$ in order;
    \\
    5.5.\>\> replace $T_{r}$ by $T_{r}\cap N(u)$ and $T_{r} \setminus  N(u)$ in order;
    \\
    6.\> {\bf return} ``yes.'' %an ordering from $\cal T$, each set is replaced by an arbitrary ordering.
  \end{tabbing}
    \end{minipage}
  };
\caption{The procedure for verifying whether a partition satisfies the conditions in Definition~\ref{def:pt-partition}.}
\label{fig:alg-verify-partition}
\end{figure}

\begin{lemma}\label{lemma:CheckPartition}
  Let  $G$ be a connected interval graph.  Given a partition $I \uplus U$ of $V(G)$ that satisfies Definition~\ref{def:pt-partition}(1), we can check in $O(m + n)$ time whether it satisfies Definition~\ref{def:pt-partition}(2--3) as well.
\end{lemma}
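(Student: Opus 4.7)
The plan is to prove correctness and efficiency of the algorithm in Figure~\ref{fig:alg-verify-partition}. The key reduction, via Proposition~\ref{prop:UIGSubOrderGeneralize}, is that every umbrella ordering of $G[U]$---up to full reversal---is obtained from a fixed sequence $T_1, \ldots, T_t$ of its true-twin classes by permuting vertices arbitrarily within each class. Under this reduction, condition (iii) of Definition~\ref{def:pt-partition} amounts to $N(I) \subseteq N[T_1]$ in one of the two orientations, and this is exactly what steps 3--4 verify (using that the reversal of an umbrella ordering is also an umbrella ordering). Condition (ii), that for each $u \in I$ the set $N(u)$ appear consecutively in $\sigma$, decomposes into: (a) every true-twin class strictly between the first and the last one meeting $N(u)$ is entirely contained in $N(u)$; and (b) in the two boundary classes the vertices of $N(u) \cap T$ sit on the side adjacent to the interior classes. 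Step 5.3 checks (a) and steps 5.4--5.5 commit the splits needed for (b).

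The main obstacle is to show that processing vertices of $I$ sequentially with \emph{permanent} boundary splits never over-commits---that is, a ``yes'' answer indeed certifies the existence of a valid umbrella ordering. Here the decisive input from Definition~\ref{def:pt-partition}(i) is that $G[N[I]]$ is a threshold graph, so the family $\{N(u) : u \in I\}$ is a chain under inclusion. Consequently, for any two $u, u' \in I$ with $N(u) \subseteq N(u')$ and any true-twin class $T$ crossed by both, $N(u) \cap T \subseteq N(u') \cap T$; moreover both slices must lie on the same boundary side of $T$ in any valid arrangement, since they face the interior classes of the larger neighborhood, which contain the interior classes of the smaller one. Hence the split recorded for one vertex is a refinement compatible with every split demanded by another, and an inductive argument shows that any valid umbrella ordering survives the sequence of splits. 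If at some iteration no consistent arrangement remains, step 5.3 catches it and the algorithm correctly rejects.

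For the running time, an umbrella ordering $\sigma$ of $G[U]$ together with its true-twin partition is produced in $O(m+n)$ time by the recognition algorithm of~\cite{corneil-04-recognize-uig}. We store $\cal T$ as a doubly linked list of classes, each class as a doubly linked list of its vertices, with a pointer from every vertex to its current class. For each $u \in I$ we then scan $N(u)$ once to bucket its vertices by class, read off $T_\ell$ and $T_r$, verify the interior classes by comparing bucket counts with class sizes, and perform the two splits by moving the bucketed vertices into newly created list nodes. Every operation in one iteration costs $O(|N(u)|)$, so the total work over all of $I$ is $O(\sum_{u \in I} |N(u)|) = O(m)$, giving $O(m+n)$ overall.
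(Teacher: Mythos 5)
Your proposal matches the paper's proof in all essentials: it analyzes the same procedure (Figure~\ref{fig:alg-verify-partition}), reduces the search over umbrella orderings to the fixed true-twin class sequence via Proposition~\ref{prop:UIGSubOrderGeneralize}, handles condition (iii) through the $N(I)\subseteq N[T_1]$ test after a possible reversal, and justifies the permanent boundary splits by the chain structure of $\{N(u): u\in I\}$ coming from the threshold condition in Definition~\ref{def:pt-partition}(i), exactly as the paper does. The only difference is a cosmetic one in the time analysis (bucketing $N(u)$ by class versus scanning outward from the previous iteration's boundary classes), and both yield the same $O(m+n)$ bound.
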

\begin{proof}
  We may number vertices in $I$ in a way that $I = \{ u_1, \ldots, u_{|I|} \}$ and $N(u_{1}) \subseteq \ldots \subseteq N(u_{|I|})$; this is possible because $I$ is an independent set and $N[I]$ induces a threshold graph.   Note that $d(u_1) \le \ldots \le d(u_{|I|})$.  We call the procedure given in Figure~\ref{fig:alg-verify-partition}.

In the first two steps, it starts with finding an umbrella ordering $\sigma$ of $G-I$, and then lists the true-twin classes of $G - I$ in their order of occurrences in $\sigma$.
By Proposition~\ref{prop:UIGSubOrderGeneralize}, the first vertex of any umbrella ordering of $G - I$ has to be from $T_1$ or $T_t$.  If $N(I) \nsubseteq N[T_1]$ and $N(I) \nsubseteq N[T_t]$, then Definition~\ref{def:pt-partition}(3) cannot be satisfied by any umbrella ordering of $G - I$.  This justifies step~3.  Note that it is possible $N(I) \subseteq N[T_t]$ and $N(I) \subseteq N[T_1]$, then $N(I)$ are universal vertices of $G - I$.  This is the trivial case.  Otherwise we make sure that $N(I)\subseteq N(v)$ for some vertex in the first set of $\cal T$ in step~4.  

It now enters step~5.  The sets $T_{\ell}$ and $T_{r}$ exist because $G$ is connected and $I$ is an independent set.  
The focus is on step~5.3.  Suppose that $T\not\subseteq N(u_i)$.  Note that $T$ was not split from the same twin class as $T_\ell$ or $T_r$: Otherwise, $T\subseteq N(u_j)$ for some $j < i$, but then $T\subseteq N(u_i)$ as $N(u_j)\subseteq N(u_i)$.  By Proposition~\ref{prop:UIGSubOrderGeneralize}, vertices in $T$ have to be between vertices of $T_\ell$ and $T_r$ in any umbrella ordering of $G - I$.  Therefore, Definition~\ref{def:pt-partition}(2) cannot be satisfied by any umbrella ordering of $G - I$.  This justifies Step~5.3.  We prove the correctness of step~6 by arguing that If the procedure passes step~5, then it satisfies Definition~\ref{def:pt-partition}(2-3).  We use the umbrella ordering of $G - I$ from $\cal T$ by replacing each set by an arbitrary ordering.  Condition (2) is satisfied for vertex $u_i$ after the $i$th iteration, and the sets containing $N(u_i)$ would never be touched after that.  On the other hand, step~5 never switches the order of two sets, and hence $N(I)\subseteq N[v]$ for any vertex in the first set, which is a subset of $T_1$.  Hence, condition (3) is satisfied as well.

It remains to show that the algorithm can be implemented in $O(n + m)$ time.  It is straightforward for steps 1--4, and hence we focus on step~5.
For each set in $\cal T$, we maintain a doubly linked list; further, we connect these lists into another doubly linked list.  This allows us, among others, to split in time proportional to the number of elements to be split from a set.
We also maintain an array of size $n$, of which the $i$th element points to the position of the $i$th vertex in the doubly linked lists.  With these data structures it is straightforward to implement step~5 in $O(n + m)$ time.  In the first iteration, we go through all the $n$ vertices to find $T_\ell$ and $T_r$; after that we scan from the  $T_\ell$ and $T_r$ of the previous iteration to the left and to the right respectively.  Hence, in $i$th iteration with $1< i\le |I|$, we scan only $O(d(u_i))$ vertices.
This completes the proof.
\end{proof}

We summarize our algorithm in Figure~\ref{fig:alg-recognition}.

\begin{figure}[h!]
  \tikz\path (0,0) node[draw=gray!50, text width=.9\textwidth, rectangle, rounded corners, inner xsep=20pt, inner ysep=10pt]{
    \begin{minipage}[t!]{\textwidth} \small
      % {\sc Input}: a connected interval graph $G$.
      % \\
      % {\sc Output}: true, flase, or a subgraph of $G$.
      \begin{tabbing}
        AAA\=AAa\=AAA\=AAA\=MMMMMMMMMMMMMAAAAAAAAAAAAAAAAAAAAAAAAA\=A \kill
        1.\> {\bf for each} component $C$ of $G$ {\bf do}
        \\
        \>\> {\bf if} $C$ is an unit interval graph {\bf then} remove $C$ from $G$;
        \\
        2.\> {\bf if} $G$ is empty {\bf then return} ``yes'';
        \\
        3.\> {\bf if} $G$ has more than one component {\bf then return} ``no'';
        \\
 %       \comment{ Henceforth $G$ is connected.}
 %       \\
        4.\> {\bf while} $G$ has a universal or isolated vertex $v$ {\bf then} $G \leftarrow G - v$;
        \\
        5.\> {\bf if} $G$ is empty {\bf then return} ``yes'';
        \\
        6.\> {\bf if} $G$ is not connected {\bf then}
        \\
        \>\> {\bf if} there are more than two components {\bf then return} ``no'';
        \\
        \>\> {\bf if} one component is a clique, and the other is a threshold graph
        \\
        \>\> {\bf then return} ``yes'';
        \\
        \>\> {\bf else return} ``no'';
        \\
        7.\> {\bf if} a vertex is deleted in step~4, {\bf then} add a universal vertex to $G$;
        \\
    8.\> build a clique path $K_1, \ldots, K_p$ of $G$;
    \\
    9.\> build $I_L$ and $I_R$ by greedily taking simplicial vertices from the clique path;
    \\
    10.\> {\bf if} either of $I_L$ and $I_R$ defines a partition satisfying Definition~\ref{def:pt-partition}
    \\
    \> {\bf then return} ``yes'';
    \\
    \> {\bf else return} ``no.''
  \end{tabbing}
    \end{minipage}
  };
  \caption{The algorithm for recognizing paired threshold graphs.}
  \label{fig:alg-recognition}
\end{figure}

\begin{proof}[Proof of Theorem~\ref{thm:RecognizePTG}]
  We use the algorithm described in Figure~\ref{fig:alg-recognition}.  We first prove its correctness.  Steps~1--3 follow from Proposition~\ref{prop:uig-assignment} and Lemma~\ref{lem:disconnected}.  Steps~4--7 follow from Lemma~\ref{lemma:ComponentsOfPT}.  Steps~8--10 follow from Proposition~\ref{prop:CanonicalProperty} and Theorem~\ref{thm:CharacOfCliquePath}.

  We now analyze its running time.  Step~1 calls the $O(m+n)$-time algorithm in \cite{deng-96-proper-interval-and-cag}.  Steps~2 and 3 are trivial.  In step~4, note that isolated vertices and universal vertices can be decided simply by vertex degrees.  Step~5 is trivial.  Step~6 can also be easily done in time $O(m+n)$: It suffices to check the vertex degrees. Step~7 is trivial.  Step~8 calls an algorithm of \cite{hsu-99-recognizing-interval-graphs} or \cite{habib-00-LBFS-and-partition-refinement}, and step~9 is straightforward.  Step~10 calls the $O(m+n)$-time procedure of Lemma~\ref{lemma:CheckPartition}.  This concludes the proof.
\end{proof}

\bibliographystyle{plainurl}

\begin{thebibliography}{10}
	
	\bibitem{chvatal-77-inequalities-in-ip}
	V{\'a}clav Chv{\'a}tal and Peter~L. Hammer.
	\newblock Aggregation of inequalities in integer programming.
	\newblock In Peter~L. Hammer, Ellis~L. Johnson, Bernhard~H. Korte, and
	George~L. Nemhauser, editors, {\em Studies in Integer Programming}, volume~1
	of {\em Annals of Discrete Mathematics}, pages 145--162. Elsevier, 1977.
	\newblock \href {http://dx.doi.org/10.1016/S0167-5060(08)70731-3}
	{\path{doi:10.1016/S0167-5060(08)70731-3}}.
	
	\bibitem{corneil-04-recognize-uig}
	Derek~G. Corneil.
	\newblock A simple 3-sweep {LBFS} algorithm for the recognition of unit
	interval graphs.
	\newblock {\em Discrete Applied Mathematics}, 138(3):371--379, 2004.
	\newblock \href {http://dx.doi.org/10.1016/j.dam.2003.07.001}
	{\path{doi:10.1016/j.dam.2003.07.001}}.
	
	\bibitem{deng-96-proper-interval-and-cag}
	Xiaotie Deng, Pavol Hell, and Jing Huang.
	\newblock Linear-time representation algorithms for proper circular-arc graphs
	and proper interval graphs.
	\newblock {\em SIAM Journal on Computing}, 25(2):390--403, 1996.
	\newblock \href {http://dx.doi.org/10.1137/S0097539792269095}
	{\path{doi:10.1137/S0097539792269095}}.
	
	\bibitem{foldes-77-split-graphs}
	St\'ephane Foldes and Peter~L. Hammer.
	\newblock Split graphs.
	\newblock In {\em Proceedings of the {E}ighth {S}outheastern {C}onference on
		{C}ombinatorics, {G}raph {T}heory and {C}omputing}, pages 311--315, 1977.
	\newblock Congressus Numerantium, No. XIX.
	
	\bibitem{gilmore-64-comparability-and-interval}
	P.~C. Gilmore and A.~J. Hoffman.
	\newblock A characterization of comparability graphs and of interval graphs.
	\newblock {\em Canadian Journal of Mathematics}, 16:539--548, 1964.
	\newblock \href {http://dx.doi.org/10.4153/CJM-1964-055-5}
	{\path{doi:10.4153/CJM-1964-055-5}}.
	
	\bibitem{golumbic-2004-perfect-graphs}
	Martin~C. Golumbic.
	\newblock {\em Algorithmic Graph Theory and Perfect Graphs}, volume~57 of {\em
		Annals of Discrete Mathematics}.
	\newblock North-Holland Publishing Co., Amsterdam, The Netherlands, 2004.
	
	\bibitem{habib-00-LBFS-and-partition-refinement}
	Michel Habib, Ross~M. McConnell, Christophe Paul, and Laurent Viennot.
	\newblock Lex-{BFS} and partition refinement, with applications to transitive
	orientation, interval graph recognition and consecutive ones testing.
	\newblock {\em Theoretical Computer Science}, 234(1-2):59--84, 2000.
	\newblock \href {http://dx.doi.org/10.1016/S0304-3975(97)00241-7}
	{\path{doi:10.1016/S0304-3975(97)00241-7}}.
	
	\bibitem{hammer-90-difference-graphs}
	Peter~L. Hammer, Uri~N. Peled, and Xiaorong Sun.
	\newblock Difference graphs.
	\newblock {\em Discrete Applied Mathematics}, 28(1):35--44, 1990.
	\newblock \href {http://dx.doi.org/10.1016/0166-218X(90)90092-Q}
	{\path{doi:10.1016/0166-218X(90)90092-Q}}.
	
	\bibitem{hsu-99-recognizing-interval-graphs}
	Wen-Lian Hsu and Tze-Heng Ma.
	\newblock Fast and simple algorithms for recognizing chordal comparability
	graphs and interval graphs.
	\newblock {\em SIAM Journal on Computing}, 28(3):1004--1020, 1999.
	\newblock \href {http://dx.doi.org/10.1137/S0097539792224814}
	{\path{doi:10.1137/S0097539792224814}}.
	
	\bibitem{klavik-16-nested-interval-graphs}
	Pavel Klav{\'{\i}}k, Yota Otachi, and Jir{\'{\i}} Sejnoha.
	\newblock On the classes of interval graphs of limited nesting and count of
	lengths.
	\newblock In Seok{-}Hee Hong, editor, {\em Algorithms and Computation (ISAAC)},
	volume~64 of {\em LIPIcs}, pages 45:1--45:13. Schloss Dagstuhl -
	Leibniz-Zentrum fuer Informatik, 2016.
	\newblock \href {http://dx.doi.org/10.4230/LIPIcs.ISAAC.2016.45}
	{\path{doi:10.4230/LIPIcs.ISAAC.2016.45}}.
	
	\bibitem{kobayashi-19-double-threshold-graphs}
	Yusuke Kobayashi, Yoshio Okamoto, Yota Otachi, and Yushi Uno.
	\newblock Linear-time recognition of double-threshold graphs.
	\newblock arXiv:1909.09371, 2019.
	
	\bibitem{looges-93-greedy-algorithms-uig}
	Peter~J. Looges and Stephan Olariu.
	\newblock Optimal greedy algorithms for indifference graphs.
	\newblock {\em Computers \& Mathematics with Applications}, 25(7):15--25, 1993.
	\newblock \href {http://dx.doi.org/10.1016/0898-1221(93)90308-I}
	{\path{doi:10.1016/0898-1221(93)90308-I}}.
	
	\bibitem{mahadev-95-threshold-graphs}
	N.V.R. Mahadev and U.N. Peled, editors.
	\newblock {\em Threshold Graphs and Related Topics}, volume~56 of {\em Annals
		of Discrete Mathematics}.
	\newblock North-Holland Publishing Co., Amsterdam, The Netherlands, 1995.
	\newblock \href {http://dx.doi.org/10.1016/S0167-5060(06)80001-4}
	{\path{doi:10.1016/S0167-5060(06)80001-4}}.
	
	\bibitem{nikolopoulos-00-cographs-threshold-graphs}
	Stavros~D. Nikolopoulos.
	\newblock Recognizing cographs and threshold graphs through a classification of
	their edges.
	\newblock {\em Information Processing Letters}, 74(3-4):129--139, 2000.
	\newblock \href {http://dx.doi.org/10.1016/S0020-0190(00)00041-7}
	{\path{doi:10.1016/S0020-0190(00)00041-7}}.
	
	\bibitem{ramalingam-88-domination-interval-graphs}
	G.~Ramalingam and C.~Pandu Rangan.
	\newblock A unified approach to domination problems on interval graphs.
	\newblock {\em Information Processing Letters}, 27(5):271--274, 1988.
	\newblock \href {http://dx.doi.org/10.1016/0020-0190(88)90091-9}
	{\path{doi:10.1016/0020-0190(88)90091-9}}.
	
	\bibitem{ravanmehr-18-paired-threshold-graphs}
	Vida Ravanmehr, Gregory~J. Puleo, Sadegh Bolouki, and Olgica Milenkovic.
	\newblock Paired threshold graphs.
	\newblock {\em Discrete Applied Mathematics}, 250:291--308, 2018.
	\newblock \href {http://dx.doi.org/10.1016/j.dam.2018.05.008}
	{\path{doi:10.1016/j.dam.2018.05.008}}.
	
	\bibitem{roberts-69-indifference-graphs}
	Fred~S. Roberts.
	\newblock Indifference graphs.
	\newblock In Frank Harary, editor, {\em Proof Techniques in Graph Theory (Proc.
		Second Ann Arbor Graph Theory Conf., 1968)}, pages 139--146. Academic Press,
	New York, 1969.
	
	\bibitem{wegner-67-dissertation}
	Gerd Wegner.
	\newblock {\em Eigenschaften der Nerven homologisch-einfacher Familien im
		{$R^n$}}.
	\newblock PhD thesis, Universit{\"a}t G{\"o}ttingen, 1967.
\end{thebibliography}

\end{document}